\pgfplotsset{compat = newest}
\Crefname{ALC@unique}{Line}{Lines} % <- Preamble
\begin{document}

\title{Modified Iterative Quantum Amplitude Estimation is Asymptotically Optimal}

\author{Shion Fukuzawa
\thanks{University of California, Irvine {\tt fukuzaws@uci.edu, christh9@uci.edu, irani@ics.uci.edu, jzion@uci.edu}}
\and Christopher Ho$^*$
\and Sandy Irani$^*$
\and Jasen Zion$^*$
}

\date{}

\maketitle

% Copyright Statement
% When submitting your final paper to a SIAM proceedings, it is requested that you include
% the appropriate copyright in the footer of the paper.  The copyright added should be
% consistent with the copyright selected on the copyright form submitted with the paper.
% Please note that "20XX" should be changed to the year of the meeting.

% Default Copyright Statement
\fancyfoot[R]{\scriptsize{Copyright \textcopyright\ 2023 by SIAM\\
Unauthorized reproduction of this article is prohibited}}

\begin{abstract} \small\baselineskip=9pt
In this work, we provide the first QFT-free algorithm for Quantum Amplitude Estimation (QAE) that is asymptotically optimal while maintaining the leading numerical performance. QAE algorithms appear as a subroutine in many applications for quantum computers. The optimal query complexity achievable by a quantum algorithm for QAE is $O\left(\frac{1}{\epsilon}\log \frac{1}{\alpha}\right)$ queries, providing a speedup of a factor of $1/\epsilon$ over any other classical algorithm for the same problem. The original algorithm for QAE utilizes the quantum Fourier transform (QFT) which is expected to be a challenge for near-term quantum hardware. To solve this problem, there has been interest in designing a QAE algorithm that avoids using QFT. Recently, the iterative QAE algorithm (IQAE) \cite{grinko2021a} was introduced with a near-optimal $O\left(\frac{1}{\epsilon}\log \left(\frac{1}{\alpha} \log \frac{1}{\epsilon}\right)\right)$ query complexity and small constant factors. In this work, we combine ideas from the preceding line of work to introduce a QFT-free QAE algorithm that maintains the asymptotically optimal $O\left(\frac{1}{\epsilon}\log \frac{1}{\alpha}\right)$ query complexity while retaining small constant factors. We supplement our analysis with numerical experiments comparing our performance with IQAE where we find that our modifications retain the high performance, and in some cases even improve the numerical results.
\end{abstract}

\section{Introduction}

As we see the rise of the first commercially available programmable quantum computers, quantum algorithms designed over the last few decades that promise various speedups over their classical counterparts are gaining prevalence. However, these devices are still not very large in scale, and use noisy qubits which make it challenging to implement useful algorithms that would require a large number of robust qubits in highly complex circuits using many controlled gates. Because of these restrictions, algorithms that can be run effectively in this so-called NISQ era \cite{preskill2018} have been of great interest, where the objective is to find algorithms which can either reduce the number of qubits, the depth of the circuit, and/or the number of controlled gates required.

Two decades ago, Brassard, Hoyer, Mosca, and Tapp \cite{brassard2002} introduced the problem of quantum amplitude estimation. In this problem, we are given a unitary operator $\mathcal{A}$ that acts on $n + 1$ qubits such that 

\begin{equation}\label{eq:def-A}
    \mathcal{A}\ket{0}_n \ket{0} = \sqrt{1 - a}\ket{\psi_0}_n \ket{0} + \sqrt{a}\ket{\psi_1}_n \ket{1},
\end{equation}

where $a \in [0,1]$ is unknown and $\ket{\psi_0}$, $\ket{\psi_1}$ are two normalized states. Given this setting, the objective is to find a good approximation for $a$ with as few applications of $\mathcal{A}$ as possible.
They presented the problem as a generalization of Grover's algorithm \cite{grover1996}, and demonstrate that techniques used by Grover could be used to solve QAE. Brassard et al.'s algorithm enables a quadratic speedup for many applications that are solved classically using Monte Carlo simulation, which itself proves useful in finance for applications such as risk analysis \cite{woerner2019, egger2019} or option pricing \cite{rebentrost2018, stamatopoulos2020, zoufal2019}, as well as other general tasks like numerical integration \cite{montanaro2015}. Furthermore, amplitude estimation is a widely used subroutine in many quantum algorithms, making it an algorithm of high interest to implement in the near future.

Though established as a fundamental algorithm in the field, a challenge in implementing this algorithm is that it requires the ability for the quantum device to perform a large number of controlled gates as it requires the ability to implement controlled unitary gates as well as the quantum Fourier transform (QFT), both of which have been challenging to realize on physical devices. Due to this, there has recently been an increased interest in algorithms that can match the performance of QAE without requiring QFT \cite{grinko2021a, aaronson2020, suzuki2020, wie2019, rall2022, zhao2022a, plekhanov2022}. It is unlikely that this is sufficient for QAE to become a NISQ compatible algorithm, but our algorithm significantly reduces the width required for the circuit. This implies that if physical devices that can run deep circuits, these algorithms can be beneficial even with a small number of qubits. 

Aaronson and Rall \cite{aaronson2020} introduced QAES, the first QFT-free version of QAE which matches the query complexity of the original algorithm. Though asymptotically tight, the constant overhead of QAES will likely require better quantum hardware to be implemented effectively. Within the same year, two other QFT-free algorithms were introduced by Suzuki et al. \cite{suzuki2020} and Wie \cite{wie2019}. Suzuki et al.'s algorithm, which is often referred to in the literature as MLAE (Maximum Likelihood Amplitude Estimation), replaces QFT with a classical maximum likelihood estimation to analyze the combined results of the quantum circuit. Numerical experiments were performed to demonstrate the performance of the algorithm, but it is not clear how to prove an upper bound on query complexity for their algorithm. 

In 2021, Grinko, Gacon, Zoufal, and Woerner \cite{grinko2021a} introduce Iterative Quantum Amplitude Estimation (IQAE) which helped remedy the weaknesses of the three above algorithms. Its experimental performance beats that of MLAE, and the query complexity was proven to be just within a log-log factor of the desired complexity. As the name suggests, IQAE is an iterative algorithm, where the approximation of the target value is improved at each round. 

In their work, it was shown that IQAE is the best performer numerically amongst the QFT-free algorithms mentioned above \cite{grinko2021a}. In this work, we show that a modified version of IQAE matches the lower bound on the query complexity of amplitude estimation, while also maintaining the competitive numerical performance that IQAE provides. 
To achieve this, instead of requiring the same probability of success in each round, we adopt a more intricate sequence for the target probability of success in each round to compensate for the fact that the trials in later rounds require more queries. This allows us to shave off the extra log-log factor from the bound of the original IQAE, and match the lower bound query complexity of $O\left(\frac{1}{\epsilon}\log \frac{1}{\alpha}\right)$ for amplitude estimation without requiring QFT. Here, the parameter $\epsilon$ is the desired precision of the estimate, and $1 - \alpha$ is the probability that the true amplitude $a$ is $\epsilon$ away from the output estimate. In addition to matching the theoretical lower bound, we also demonstrate that our algorithm achieves comparable and, in certain important cases, better numerical performance than the original IQAE. 

A recent development of interest to this paper is Rall and Fuller's work \cite{rall2022} which showed that by a modification in the overshooting condition laid out in IQAE, it was possible to get an increase in query complexity by a factor of two. This amounts to a choice in a hyperparameter for the algorithm. We discuss our experimental observations at the end of section 2.1 and in the results section, but we find that a simple selection for hyperparameter shows a similar improvement in numerical query complexity across all variants of IQAE. There may be other ways to analyze the optimal hyperparameter depending on the experimental setup and other factors, but we believe our suggestion is a reasonably good choice. Furthermore, our work resolves a question they open demonstrating that it is possible to achieve optimal query complexity without a large numerical penalty. 

Other notable works on amplitude estimation include Zhao et al. \cite{zhao2022a} who showed that they are able to reduce the classical cost of the algorithm.

\subsection{Quantum Amplitude Estimation. }

A core procedure for all amplitude estimation algorithms mentioned in this paper is amplitude amplification. To describe this, we begin by defining $\theta_a \in [0, \pi/2]$ as the angle that satisfies $\sin^2(\theta_a) = a$, where $a$ is defined in (\ref{eq:def-A}). Then, given any unitary $\mathcal{A}$ defined by (\ref{eq:def-A}) we see that 
\begin{equation}
    \mathbb{P}[\ket{1}] = \sin^2(\theta_a). 
\end{equation}
As described by Brassard et al. \cite{brassard2002} and Grover \cite{grover1996}, we can define and easily construct the operator $Q := \mathcal{A}\mathcal{S}_0\mathcal{A}^\dagger \mathcal{S}_{\psi_0}$, where $\mathcal{S}_0 := \mathbb{I} - 2 \ket{\psi_0}\bra{\psi_0} \otimes \ket{0}\bra{0}$ and $\mathcal{S}_{\psi_0} := \mathbb{I} - 2 \ket{0}_{n+1}\bra{0}_{n+1}$. It has been shown that if we measure the circuit $Q^k\mathcal{A}\ket{0}_n \ket{0}$, we have
\begin{equation}
    \mathbb{P}[\left|1\right>] = \sin^2 ((2k+1)\theta_a). 
\end{equation}
This process of boosting the probability of measuring a $\ket{1}$ state lies at the core of all QAE algorithms mentioned in this paper, leading to the various speed-ups over naive sampling of the distribution defined by $\mathcal{A}$. 

To compare our results to the original IQAE paper, we define the query complexity of a QAE algorithm as the total number of times this operator $Q$ is applied. Note that this definition of query complexity can be easily modified to count the number of times $\mathcal{A}$ is applied.

\section{Algorithm}

\subsection{Overview of IQAE. }

We will start with an overview of the IQAE algorithm and then describe our modifications to improve the asymptotic query complexity. Note that our description of IQAE is somewhat different from \cite{grinko2021a}. We believe that the explanation
given here distills the essential features of the algorithm and clarifies how IQAE differs from previous approaches.

Like the other QFT-free  algorithms, the IQAE algorithm uses the quantum computer to approximate $\mathbb{P}[\ket{1}] = \sin^2((2k+1)\theta_a)$ for the last qubit in $Q^k\mathcal{A}\ket{0}_n\ket{0}$ for different powers of $k$. Thus our measurements of the circuits provide us with an approximation of $\theta_a$, which can be converted to an approximation of $a$ within a desired precision. The algorithm we describe closely follows the structure of IQAE, but we will state all elements of the algorithm and proofs for completeness.

The algorithm has two nested loops and we refer to each iteration of the outer loop as a {\em round}.
As with all the QFT-free QAE algorithms, IQAE maintains a confidence interval $[\theta_l, \theta_u]$ for $\theta_a$
so that it can be proven that $\theta_a \in [\theta_l, \theta_u]$ with a certain probability.
QAES explicitly narrows down the size of the interval by a constant factor in each iteration.
IQAE makes progress towards narrowing the interval more indirectly.
In a given round $i$, the algorithm maintains an odd integer $K_i$ and implements measurements of the
form $Q^{k_i}\mathcal{A}\ket{0}_n\ket{0}$, for $K_i = 2 k_i + 1$.
IQAE  maintains  the invariant that the scaled interval $[K_i\theta_l, K_i\theta_u]$ lies in a single quadrant, described mathematically as 
\begin{equation}
\label{eq:invariant}
    \left\lfloor \frac{K_i \theta_l}{\pi/2} \right\rfloor = \left\lceil \frac{K_i \theta_u}{\pi/2} \right\rceil - 1.
\end{equation}
We refer to the value $R_i = \lfloor K_i \theta_l / (\pi/2) \rfloor$ as the {\em quadrant count}, which is
the number of complete quadrants passed through starting at an angle of $0$ and slowly increasing the angle until $K_i \theta_l$ or $K_i \theta_u$ is reached. The rounding is done slightly differently for the upper bound
than the lower bound. The only difference occurs when $ K_i \theta_l / (\pi/2) $ or $ K_i \theta_u / (\pi/2) $
are integers. In this case, we would like the upper bound to map to the previous quadrant.
This ensures, for example, that for $[\theta_l, \theta_u] = [\pi, 5 \pi/4]$ and
$[\theta_l, \theta_u] = [5 \pi/4, 3 \pi/2]$ both angles will have a quadrant count of $2$ as can be seen in Figure \ref{fig:quadrant-disambiguation}.

\begin{figure}[!ht]
    \centering
    \includegraphics[width=.8\linewidth]{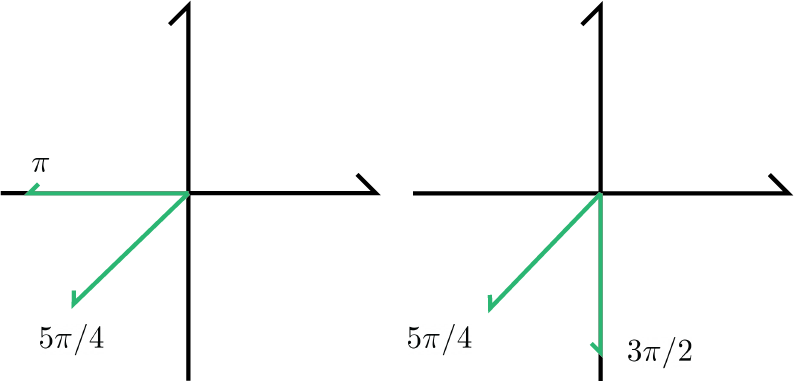}\\
    \caption{\em Edge cases for intervals of $[\theta_l, \theta_u]$. For each of the cases pictured above, the quadrant counts $R_i$ will be 2. Note that the quadrant count begins at 0. Our rounding scheme ensures that this is indeed the case, and the code we used further handles numerical floating errors that were occurring in these regions. }
    \label{fig:quadrant-disambiguation}
\end{figure}

Note that Invariant (\ref{eq:invariant}) implies that 
$K_i \theta_l$ and $K_i \theta_u$ can both be expressed as
\begin{align*}
    K_i \theta_l & = R_i \cdot (\pi/2) + \gamma_l& \mbox{for}~\gamma_l \in [0, \pi/2) \\
    K_i \theta_u & = R_i \cdot (\pi/2) + \gamma_u& \mbox{for}~\gamma_u \in (0, \pi/2] 
\end{align*}
The invariant also implies that $\theta_u - \theta_l \le \pi/2K_i$.
The algorithm makes progress from round to round by increasing the value of $K_i$ for which the invariant holds. The analysis shows that
the value of $K_i$ will increase by at least factor of $3$ in each round.

If Invariant (\ref{eq:invariant}) holds and
$\theta_a \in [\theta_l, \theta_u]$, then $K_i \theta_a$ can also be expressed as 
$R_i \cdot (\pi/2) + \gamma_a$, for $\gamma_a \in [0, \pi/2]$.
Since $K_i$ is an odd integer, we can measure $Q^{k_i}\mathcal{A}\ket{0}_n\ket{0}$, for $k_i = (K_i -1)/2$, which will return a state ending in $\ket{1}$ with probability $\sin^2 (K_i \theta_a)$.
The value $\sin^2 (K_i \theta_a)$ is equal to $\sin^2 (\gamma_a)$ or $1 - \sin^2 (\gamma_a)$, depending on whether the quadrant count $R_i$ is odd or even, as can be seen in Figure \ref{fig:gamma-alpha-conversion}. 
Using repeated measurements, we obtain a confidence interval $[a^{min}, a^{max}]$
for $\sin^2 (K_i \theta_a)$ which can be in turn translated
to a confidence interval $[\gamma^{min}, \gamma^{max}]$ for $\gamma_a$.
The Chernoff-Hoeffding inequality is used to upper bound the probability
that $\sin^2 (K_i \theta_a) \not\in [a^{min}, a^{max}]$, which is equal to the probability that $\gamma_a \not\in [\gamma^{min}, \gamma^{max}]$.
Table \ref{table:theta-i-ci} shows  the mapping from the confidence interval $[a^{min}, a^{max}]$
 to the confidence interval $[\gamma^{min}, \gamma^{max}]$
based on the parity of the quadrant count $R_i$.

\begin{figure}[!b]
    \centering
    \includegraphics[width=.8\linewidth]{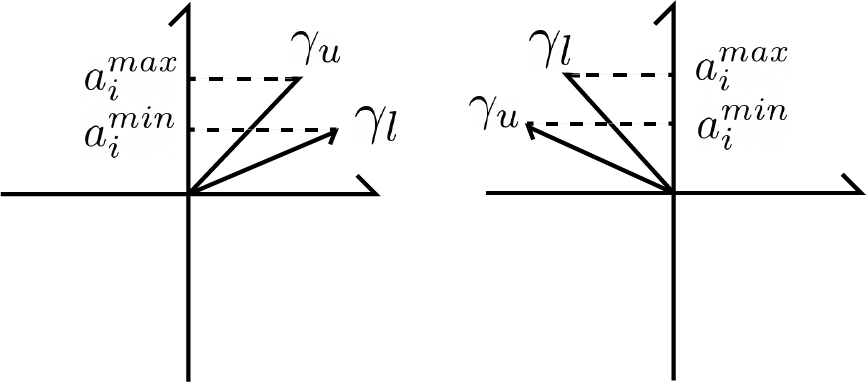}\\
    \caption{\em If the interval $[\gamma_l, \gamma_u]$ lies in an even quadrant, as $\gamma$ increases, the value $\sin^2(\gamma_i)$ increases for $i \in \{l, u\}$. On the other hand, for odd quadrants, as $\gamma$ increases, the value $\sin^2(\gamma_i)$ decreases. This drove the calculations in the conversion described in Table \ref{table:theta-i-ci}. }
    \label{fig:gamma-alpha-conversion}
\end{figure}

The new interval for $\gamma_a$ gives rise to new values  for $\theta_l$ and $\theta_u$:
$$\theta_l \leftarrow \frac{R_i (\pi/2) + \gamma_{min}}{K_i}~~~~~~\theta_u \leftarrow \frac{R_i (\pi/2) + \gamma_{max}}{K_i}$$
Invariant (\ref{eq:invariant}) trivially holds for the new $\theta_l$ and $\theta_u$
as well as the old value of $K_i$. However, if the interval $[\theta_l, \theta_u]$ has
narrowed sufficiently, then the invariant may hold for a larger value of $K$.
Note that if we have a fixed target for the probability that
$\gamma_a \in [\gamma^{min}, \gamma^{max}]$, the Chernoff-Hoeffding bound can be used to trade off the number of queries
made against the size of the interval.
For round $i$, we denote by $\alpha_i$ the desired upper bound on
probability of failure conditioned on success in all of the 
previous rounds. 
The heart of the analysis is to show
that the expression $N_{max}$, given in Line 11 of Algorithm 1,
is sufficient to guarantee that Invariant (\ref{eq:invariant}) holds with a new
value of $K$ satisfying $K \geq 3 K_i$.

\begin{table}[!t]
\centering
\begin{tabular}{|c|c|c|}
\hline
Quad. $R$& $\gamma_i^{min}$                   & $\gamma_i^{max}$                   \\ \hline
Even        & $\sin^{-1} \sqrt{a_i^{min}}$         & $\sin^{-1} \sqrt{a_i^{max}}$         \\ \hline
Odd       & $-\sin^{-1} \sqrt{a_i^{max}} + \frac{\pi}{2}$  & $-\sin^{-1} \sqrt{a_i^{min}} + \frac{\pi}{2}$  \\ \hline
\end{tabular}
\caption{\em A table describing the conversion from $a_i^{min}$, $a_i^{max}$ to $\gamma_i^{min}$, $\gamma_i^{max}$ respectively. $\sin^{-1}$ denotes the arcsin function. The top right and bottom left quadrants correspond to an even quadrant count. The top left and bottom right quadrants correspond to an odd quadrant count.}
\label{table:theta-i-ci}
\end{table}

\begin{algorithm}
\caption{Modified IQAE}
\label{alg:modified-iqae}
\begin{algorithmic}[1]
\Require $\epsilon > 0$, $\alpha > 0$, $N_{\text{shots}} \in \mathbb{N}$, Unitary $\mathcal{A}$
\State $i = 0$, $k_i = 0$ 
\State $[\theta_l, \theta_u] = [0, \pi/2]$ 
\State $K_{max} := \frac{\pi}{4\epsilon}$ 
\While{$\theta_u - \theta_l > 2\epsilon$}
\State $N := 0$, $i := i + 1$, $k_i = k_{i-1}$
\State $K_i := 2k_i + 1$
\State $\alpha_i := \frac{2\alpha}{3} \frac{K_i}{K_{max}}$
\State $N_i^{max} = \frac{2}{\sin^2(\pi/21)\sin^2(8\pi/21)}\ln\left(\frac{2}{\alpha_i}\right)$
\State \begin{varwidth}[t]{.95\linewidth} 
$R_i = \lfloor \frac{K_i\theta_l}{\pi/2} \rfloor$ \algorithmiccomment{The number of quadrants passed to get to current angle}
\end{varwidth}
\While{$k_i = k_{i-1}$}
\State Prepare circuit $Q^{k_i}\mathcal{A}\ket{0}_n\ket{0}$
\State Measure $\min\{N_{\text{shots}}, N_{i}^{max}-N\}$ times 
\State \begin{varwidth}[t]{.95\linewidth} Combine results in this round to approximate $\hat{a}_i \approx \mathbb{P}\left[\ket{1}\right]$. \end{varwidth}
\State $N = N + \min\{N_{\text{shots}}, N_{i}^{max}-N\}$
\State $\epsilon_{a_i} := \sqrt{\frac{1}{2N} \ln\left(\frac{2}{\alpha_i}\right)}$ 
\State $a_i^{\max} = \min(1, \hat{a}_i + \epsilon_{a_i})$
\State $a_i^{\min} = \max(0, \hat{a}_i - \epsilon_{a_i})$
\State \begin{varwidth}[t]{.88\linewidth} Calculate the confidence interval $[\gamma_i^{\min}, \gamma_i^{\max}]$ for $\{K_i\theta_a\}$ from $[a_i^{\min}, a_i^{\max}]$ and $R_i$ according to table \ref{table:theta-i-ci} \end{varwidth}
\State $\theta_l = \frac{R_i \cdot \pi/2 + \gamma_i^{\min}}{K_i}$
\State $\theta_u = \frac{R_i \cdot \pi/2 + \gamma_i^{\max}}{K_i}$
\If{$\theta_u-\theta_l < 2\epsilon$}
\State Break
\EndIf
\State $k_{i} := \texttt{FindNextK}(k_{i}, \theta_l, \theta_u)$
\EndWhile
\EndWhile
\State $[a_l, a_u] = [\sin^2(\theta_l), \sin^2(\theta_u)]$
\State \Return $[a_l, a_u]$
\end{algorithmic}
\end{algorithm}

\begin{algorithm}
\caption{FindNextK}
\label{alg:find-next-k}
\begin{algorithmic}[1]
\Require $k_{i}$, $\theta_l$, $\theta_u$
\State $K_{i} = 2k_{i} + 1$
\State $K = \left\lfloor \frac{\pi/2}{\theta_u - \theta_l} \right\rfloor$ 
\If {$K$ is even}
\State $K = K - 1$
\EndIf
\While{$K \geq 3K_{i}$}
\If {$\left\lfloor \frac{K\theta_l}{\pi/2} \right\rfloor = \left\lceil \frac{K\theta_u}{\pi/2} \right\rceil - 1$}
\State \Return $(K - 1)/2$
\EndIf
\State $K = K - 2$
\EndWhile
\State \Return $k_{i}$
\end{algorithmic}
\end{algorithm}

Note that if a round begins by obtaining $N_{max}$ measurements, the inner loop is unnecessary.
The algorithm will advance to the next round with the same upper bound on the probability
of failure, and the asymptotic analysis for the number of queries still holds.
The numerical performance of the algorithm can be improved by starting with a smaller number of
samples and progressively obtaining more samples until the interval $[\gamma^{min}, \gamma^{max}]$
has narrowed enough that  the
algorithm succeeds in advancing to the
next round. $N_{shots}$ is the number of additional measurements taken in 
each iteration of the inner loop.
The procedure FindNextK, takes the current values for $\theta_l$ and $\theta_u$
and finds the smallest odd integer $K$ in the range from $3K_i$ to $(\pi/2)/(\theta_u - \theta_l)$
such that $K \theta_l$ and $K \theta_u$ have the same quadrant count $R$. If no such $K$ exists,
the old value $K_i$ is returned.
Lemmas \ref{lemma:increase-angle} and \ref{lemma:round-termination} show that 
FindNextK will succeed in finding a new value for $K$
by the time the total number of measurements taken within a round reaches
$N_{max}$. The inner loop of the algorithm is illustrated in Figure \ref{fig:algorithm1}.

Note that in a given round, the old values for $\theta_l$ and $\theta_u$ are only used to find the
common quadrant count $R$. Once $R$ is determined, new values for $\theta_l$ and $\theta_u$ are derived
from the confidence interval given by the Chernoff-Hoeffding bound, which is a function of the target
probability of success and the samples taken in the round. 

IQAE also implements the Clopper-Pearson method \cite{clopper1934} for calculating a confidence interval. This method gives a more accurate confidence interval with fewer samples relative to the Chernoff-Hoeffding bound but is not amenable to a rigorous analysis. Our proof can be adapted in a way analogous to the way Grinko et al. \cite{grinko2021a} have in the appendix of their work. Our results surrounding this method are demonstrated in the numerical results section through experiment. 

We use the variable $N_\text{shots}$ as a way to gradually reach $N_{i}^{max}$ for better numerical performance. In each iteration during a round, the algorithm takes $N_\text{shots}$ samples from the quantum circuit to compute the confidence interval. If the algorithm achieves the necessary sampling conditions for the round without needing all $N_{i}^{max}$ shots, we would prefer it to proceed. In the original IQAE algorithm \cite{grinko2021a}, they defined an overshooting condition to transition between an ``early'' vs. ``late'' stage of the algorithm. These stages are a rough heuristic characterization of the circuit depth (size of $k_i$) to be sampled. When the circuit is deep, the query cost becomes very large which can blow up the numerical query complexity. Instead of taking $N_\text{shots}$ samples per iteration, they introduce a heuristic telling the algorithm to take a smaller number of shots at a time once the circuit reaches a certain depth. Rall and Fuller \cite{rall2022} refine this heuristic in what they call the ChebAE algorithm, simplifying the reasoning behind when this change should occur as demonstrated by their improved numerical performance. Since we distinguish the early and late stages of the algorithm gradually by varying the required confidence interval size, we show in the results section that choosing $N_\text{shots} = 1$ is able to demonstrate a similar numerical improvement over our initial experiments using $N_\text{shots} = 100$. 

\begin{figure*}[!ht]
    \centering
    \includegraphics[width=0.9\linewidth]{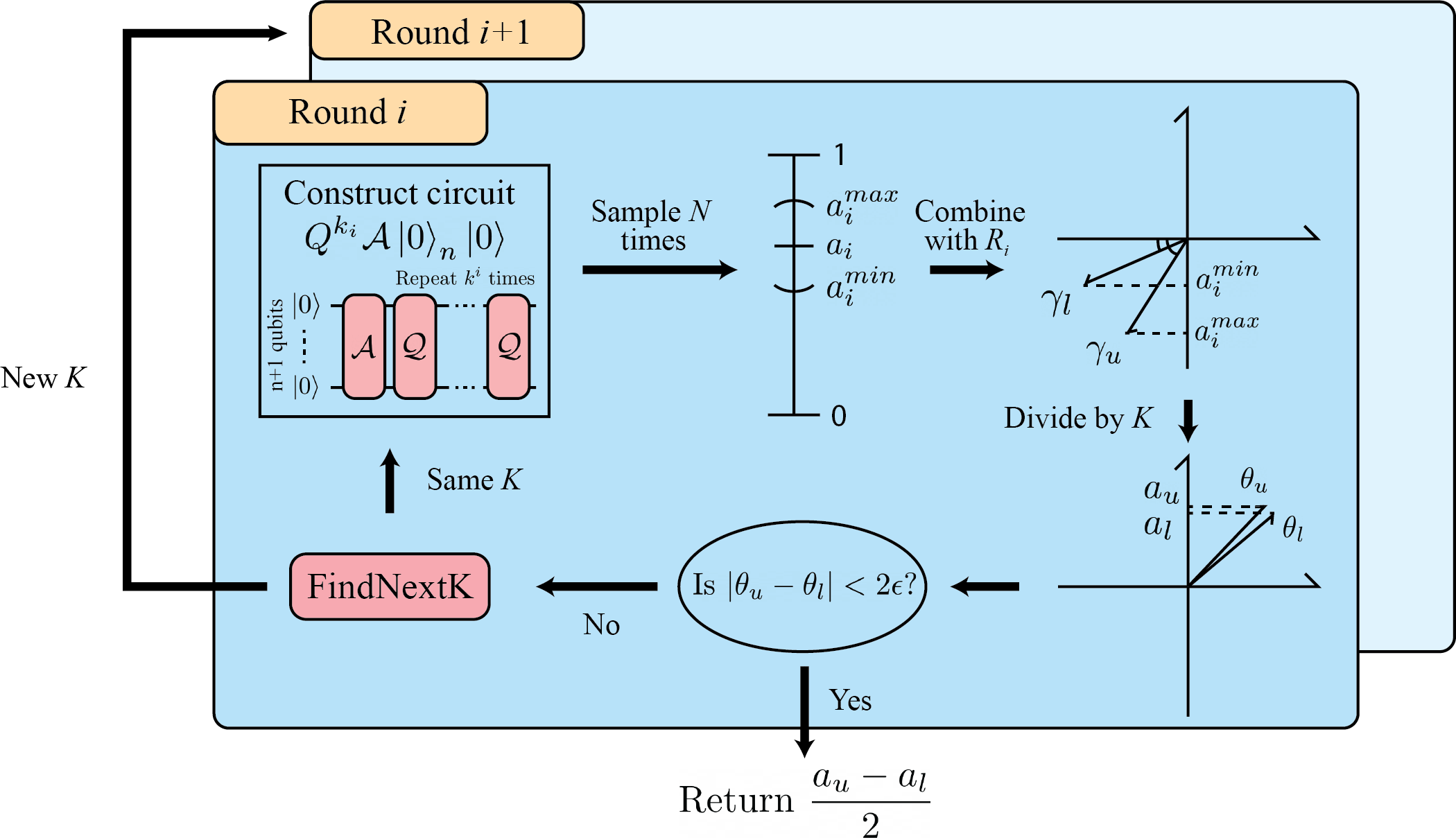}
    \caption{\em An illustration of the inner loop of Algorithm \ref{alg:modified-iqae}. In a given round $i$, we store a variable $k_i$ and $K_i := 2k_i + 1$. We first construct the circuit $Q^{k_i}\mathcal{A}\left|0\right>_n \left|0\right>$ and sample from it $N_{shots}$ times, which gives us a confidence interval for the variable $a_i$. The endpoints of this confidence interval are then converted to variables $\gamma_i^{min}$ and $\gamma_i^{max}$. Dividing these values by $K_i$ gives us a confidence interval for $a$, and we exit the algorithm once the desired precision is reached. If not, we call FindNextK (Algorithm \ref{alg:find-next-k}) and continue sampling from the circuit if it returns the same K, or advance to round $i + 1$ if it returns a new K.}
    \label{fig:algorithm1}
\end{figure*}

\subsection{Modifications to IQAE. }

The probability that the algorithm fails in round $i$ conditioned on success in all the previous rounds
is upper bounded by $\alpha_i$. Using the union bound, it can be shown that the probability of failure overall is at most
$\sum_{i=1}^t \alpha_i$, where $t$ is the total number of rounds the algorithm goes through before returning.
The original IQAE algorithm derives an upper bound $T$ on the total number of rounds and sets the target 
probability of failure to be the same for all rounds: $\alpha_i = \alpha/T$.
However, as the circuit depth increases exponentially between rounds, using a fixed probability of failure forces the query complexity to depend heavily on the depth of the final round. 
A measurement in round $i$ is applied to $Q^{k_i}\mathcal{A}\ket{0}_n\ket{0}$, which requires $K_i = 2k_i+1$ applications of $\mathcal{A}$. Modified IQAE avoids this issue by increasing the probability of failure in each round
while making sure that $\sum_{i=1}^t \alpha_i$ is still upper bounded by $\alpha$. This has the effect of ensuring that fewer measurements are required in the later rounds.
We define $\alpha_i$ as
$\alpha_i = \frac{2\alpha}{3} \frac{K_i}{K_{max}}$, where $K_{max} = \pi/4 \epsilon$
is an upper bound for the value of $K_t$, the final value in the sequence of $K_i$'s, derived such that the algorithm is guaranteed to terminate. Effectively, we are scaling $\alpha_i$ relative to the depth of the circuit that will be required in that round. 
It turns out that this new schedule for the $\alpha_i$'s allows us to shave off a factor of $\log \log \epsilon$ from the
upper bound on the number of queries.
A similar idea was used in QAES where as the circuit depth increased, the algorithm gradually loosened the required width for the confidence interval \cite{aaronson2020}.

\section{Main Theorem}

In this section, we formally state our theorem and proof. 

\begin{theorem}\label{thm:algorithm}
Given a confidence level $1 - \alpha \in (0, 1)$, a target accuracy $\epsilon > 0$, and an $(n + 1)$-qubit unitary $\mathcal{A}$ satisfying 

\[\mathcal{A}\ket{0}_n\ket{0} = \sqrt{a}\ket{\psi_0}\ket{0} + \sqrt{1-a}\ket{\psi_1}\ket{1}\]

where $\ket{\psi_0}$ and $\ket{\psi_1}$ are arbitrary $n$-qubit states and $a \in [0, 1]$, modified IQAE outputs a confidence interval for $a$ that satisfies

\[\mathbb{P}[a \not \in [a_l, a_u]] \leq \alpha\]

where $a_u - a_l < 2\epsilon$, leading to an estimate $\hat{a}$ for $a$ such that $|a - \hat{a}| < \epsilon$ with a confidence of $1 - \alpha$, using $O\left(\frac{1}{\epsilon}\log \frac{1}{\alpha}\right)$ applications of $\mathcal{A}$. 
\end{theorem}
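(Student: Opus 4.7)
The plan is to split the argument into three independent claims matching the three conclusions of the statement: the returned interval contains $a$ with probability at least $1-\alpha$, its width is at most $2\epsilon$, and the total number of applications of $\mathcal{A}$ is $O((1/\epsilon)\log(1/\alpha))$. The whole analysis proceeds by induction on the round index, and relies on the inner-loop lemmas (referred to in the body as Lemmas \ref{lemma:increase-angle} and \ref{lemma:round-termination}) which guarantee that each round advances the circuit power by a factor of at least three, i.e.\ $K_{i+1} \geq 3K_i$, by the time the $N_i^{max}$ budget is exhausted.

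For the probability bound, I would define the round-level event $\mathcal{E}_i$ that the Chernoff--Hoeffding interval $[a_i^{min},a_i^{max}]$ built in round $i$ contains $\sin^2(K_i\theta_a)$. The choice of $\epsilon_{a_i}$ in Line 15 together with two-sided Hoeffding gives $\mathbb{P}[\neg\mathcal{E}_i]\leq \alpha_i$. Under $\mathcal{E}_i$ and the inductive validity of Invariant (\ref{eq:invariant}), Table \ref{table:theta-i-ci} converts $[a_i^{min},a_i^{max}]$ into an interval $[\gamma_i^{min},\gamma_i^{max}]$ that still contains the true $\gamma_a$, so $[\theta_l,\theta_u]$ continues to contain $\theta_a$ and the invariant is preserved for either $K_i$ or the new $K$ returned by FindNextK. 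The key quantitative step, and the payoff of the modified schedule, is the union bound
\begin{equation*}
\sum_{i=1}^t \alpha_i \;=\; \frac{2\alpha}{3K_{max}}\sum_{i=1}^t K_i \;\leq\; \frac{2\alpha}{3K_{max}}\cdot\frac{3K_{max}}{2} \;=\; \alpha,
\end{equation*}
where the inequality uses the geometric bound $K_{t-j}\leq K_{max}/3^j$ that follows from the progression lemma together with $K_t<K_{max}$ (the last round is entered with $\theta_u-\theta_l>2\epsilon$, which by the invariant forces $K_t<\pi/(4\epsilon)=K_{max}$).

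The precision bound is then essentially automatic: the outer loop exits only when $\theta_u-\theta_l<2\epsilon$, and since the derivative of $\sin^2$ on $[0,\pi/2]$ is at most one in absolute value, the mean value theorem gives $a_u-a_l<2\epsilon$. For the query complexity, the total number of applications of $\mathcal{A}$ is at most $\sum_{i=1}^t K_i\,N_i^{max}$; substituting $N_i^{max}=O(\log(K_{max}/(\alpha K_i)))$ and reusing $K_{t-j}\leq K_{max}/3^j$ converts the sum into a geometric series
\begin{equation*}
\sum_{i=1}^t K_i\, N_i^{max} \;=\; O\Bigl(\sum_{j\geq 0} \tfrac{K_{max}}{3^j}\bigl(j\log 3 + \log\tfrac{1}{\alpha}\bigr)\Bigr) \;=\; O\Bigl(\tfrac{1}{\epsilon}\log\tfrac{1}{\alpha}\Bigr),
\end{equation*}
since $\sum_j j\cdot 3^{-j}$ and $\sum_j 3^{-j}$ are absolute constants. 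This is precisely where the choice $\alpha_i\propto K_i/K_{max}$ earns its savings: with the flat schedule $\alpha_i=\alpha/T$ used in the original IQAE of \cite{grinko2021a}, the same sum inherits an extra $\log T=\Theta(\log\log(1/\epsilon))$ factor on top of $\log(1/\alpha)$.

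The hard part, as usual for IQAE-style algorithms, is justifying the progression hypothesis $K_{i+1}\geq 3K_i$ on which both preceding calculations rest. This is the content of the inner-loop lemma: after at most $N_i^{max}$ measurements, the Chernoff half-width $\epsilon_{a_i}$ is small enough that the resulting $[\gamma_i^{min},\gamma_i^{max}]$ admits some odd integer $K\in[3K_i,\pi/(2(\theta_u-\theta_l))]$ for which $K\theta_l$ and $K\theta_u$ share a quadrant count. Converting a width bound on $a$ to a width bound on $\gamma$ through Table \ref{table:theta-i-ci} loses a factor that depends on how close $\sin^2(K_i\theta_a)$ sits to $0$ or $1$, and the atypical constant $2/(\sin^2(\pi/21)\sin^2(8\pi/21))$ in $N_i^{max}$ is the worst-case amplification obtained by partitioning the quadrant into subregions and bounding the derivative of $\sin^{-1}\sqrt{\cdot}$ on each piece, mirroring the argument in \cite{grinko2021a}. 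Once that geometric claim is established, the three-part global analysis above goes through essentially unchanged.
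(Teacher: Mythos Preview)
Your proposal is correct and follows essentially the same route as the paper: the same three-part decomposition (coverage probability via a union bound exploiting the schedule $\alpha_i\propto K_i$, precision via the 1-Lipschitz bound on $\sin^2$, and query count via the geometric control $K_{t-j}\le K_{max}/3^j$), with the progression $K_{i+1}\ge 3K_i$ delegated to the inner-loop lemmas. The only step you leave implicit that the paper spells out is that, in the query-complexity sum, replacing $K_i$ by its upper bound $K_{max}/3^j$ inside the product $K_i\ln(3K_{max}/(\alpha K_i))$ is justified because $x\mapsto x\ln(c/x)$ is increasing on $[1,K_{max}]$ (which holds since $K_{max}\le c/e$).
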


\subsection{Upper Bound on the Number of Rounds. }

The algorithm begins with $K_0 = 1$. By definition, the algorithm advances to the next round
when FindNextK successfully finds a new value of $K$, and FindNextK only returns a new
value if the new $K$ is at least $3 K_i$, where $K_i$ is the current value of $K$.
Therefore, $K_{i} \ge 3 K_{i-1}$ for each round.
The upper bound proven below for the maximum value of $K$ establishes that the number of rounds
will be at most $\log_3 (\pi/ 4 \epsilon)$. 

\begin{lemma}\label{lemma:K-i-upperbound}
$K_i \leq \frac{\pi}{4\epsilon}$ for all $i$. 
\end{lemma}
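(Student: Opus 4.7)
The plan is to argue directly from the construction of \texttt{FindNextK} together with the loop guard of the outer while loop. The key observation is that when \texttt{FindNextK} is called to produce the next value of $K$, the algorithm has just checked on Line 22 that $\theta_u - \theta_l \ge 2\epsilon$ (otherwise it would have broken out of the inner loop before reaching Line 25). I will use this together with the fact that \texttt{FindNextK} never returns a value larger than its starting guess $\lfloor (\pi/2)/(\theta_u - \theta_l)\rfloor$.

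First I would handle the base case: the algorithm initializes $K_0 = 1$, and we may assume without loss of generality that $\epsilon \le \pi/4$ (for larger $\epsilon$ the initial trivial interval $[0,\pi/2]$ already has half-width below $\epsilon$ and the algorithm returns immediately), so $K_0 = 1 \le \pi/(4\epsilon)$. Next, for the inductive step, suppose $K_i \le \pi/(4\epsilon)$. Either \texttt{FindNextK} returns the old value $k_i$, in which case $K_{i+1} = K_i$ and the bound is preserved, or it returns some $(K-1)/2$ with $K$ odd. In the latter case, inspecting Algorithm~\ref{alg:find-next-k}, the variable $K$ is initialized on Line 2 to $\lfloor (\pi/2)/(\theta_u - \theta_l)\rfloor$ and is only ever decremented inside the while loop. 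Therefore the returned odd $K$ satisfies
\[
K_{i+1} = K \;\le\; \Bigl\lfloor \frac{\pi/2}{\theta_u - \theta_l} \Bigr\rfloor \;\le\; \frac{\pi/2}{\theta_u - \theta_l}.
\]

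Finally, I would combine this with the fact that whenever \texttt{FindNextK} is invoked on Line 25, the preceding conditional on Line 22 guarantees $\theta_u - \theta_l \ge 2\epsilon$. Substituting yields
\[
K_{i+1} \;\le\; \frac{\pi/2}{2\epsilon} \;=\; \frac{\pi}{4\epsilon},
\]
which completes the induction.

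There is no substantive obstacle here; the lemma is essentially immediate once one notices that the outer loop never allows \texttt{FindNextK} to run past the target precision and that \texttt{FindNextK} is monotone nonincreasing in its candidate $K$. The only minor care needed is to rule out the vacuous case $\epsilon > \pi/4$ and to observe that the $-2$ decrement on Line 10 of \texttt{FindNextK} preserves oddness so the upper bound $\lfloor (\pi/2)/(\theta_u-\theta_l)\rfloor$ remains a valid ceiling on any value the procedure can output.
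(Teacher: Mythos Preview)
Your proposal is correct and follows essentially the same approach as the paper: both arguments combine the observation that \texttt{FindNextK} is only invoked when $\theta_u - \theta_l \ge 2\epsilon$ with the fact that any value it returns is at most $(\pi/2)/(\theta_u-\theta_l)$. The paper justifies the latter bound by appealing to Invariant~(\ref{eq:invariant}), whereas you obtain it by directly inspecting the initialization and monotone decrement in Algorithm~\ref{alg:find-next-k}; your version is a bit more explicit (base case, induction frame), but the substance is identical.
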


\begin{@proof}
By the termination conditions of our algorithm, if FindNextK is called at the end of a round, then we know $\theta_u - \theta_l \geq 2\epsilon$. Furthermore, by the invariant condition for the scaled confidence interval described in equation (\ref{eq:invariant}), if this call to FindNextK returns a new $K_i$, it must be the case that  
\begin{equation}
    K_i \leq \frac{\pi/2}{\theta_u - \theta_l}.
\end{equation}
We show in lemma \ref{lemma:increase-angle} that such a $K_i$ always exists. Combining the above, we get 
\begin{equation}
    K_i \leq \frac{\pi/2}{2\epsilon} = \frac{\pi}{4\epsilon}.
\end{equation}
\end{@proof}

\subsection{Proof that Each Round Successfully Terminates. }

Lemma \ref{lemma:round-termination} establishes that each round will successfully terminate by proving
that by the time $N$ reaches $N_{max}$, FindNextK will successfully find a new value for $K$
which allows the algorithm to advance to the next round. 
The proof of Lemma  \ref{lemma:round-termination} depends on
the following lemma whose result can be stated independent of the algorithm. The proof 
of Lemma \ref{lemma:increase-angle} is slightly involved and distracting from understanding the dynamics of the algorithm, so it has been moved to the appendix.

\begin{lemma}\label{lemma:increase-angle}
Given $\theta_a, \theta_b$ such that 
\begin{enumerate}
    \item $\theta_b - \theta_a > 0$,
    \item $|\sin^2 \theta_b - \sin^2 \theta_a| \leq \sin \frac{\pi}{21} \sin \frac{8\pi}{21}$, 
    \item the two angles lie in the same quadrant, i.e., 
    \[\left\lfloor \frac{\theta_a}{\pi/2} \right\rfloor = \left\lceil \frac{\theta_b}{\pi/2}\right\rceil - 1,\]
\end{enumerate}
then there exists an odd integer $q \in \left[3, \frac{\pi/2}{\theta_b - \theta_a}\right]$ such that 
\begin{equation}
    \left\lfloor \frac{q\theta_a}{\pi/2} \right\rfloor = \left\lceil \frac{q\theta_b}{\pi/2} \right\rceil - 1.
\end{equation}
\end{lemma}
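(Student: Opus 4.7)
The plan is to reduce to a canonical form in the fundamental quadrant $[0, \pi/2]$ and then perform a case analysis based on where $\theta_a, \theta_b$ sit within it.

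First, using Condition~3, I would write $\theta_a = n\pi/2 + \gamma_a$ and $\theta_b = n\pi/2 + \gamma_b$ with $\gamma_a, \gamma_b \in [0, \pi/2]$. The target equality $\lfloor q\theta_a/(\pi/2)\rfloor = \lceil q\theta_b/(\pi/2)\rceil - 1$ becomes $\lfloor q\gamma_a/(\pi/2)\rfloor = \lceil q\gamma_b/(\pi/2)\rceil - 1$ after subtracting the common integer $qn$. Using the identity $\sin^2 x - \sin^2 y = \sin(x+y)\sin(x-y)$ together with the reflection symmetry of $\sin^2$ about $\pi/2$, Condition~2 reduces to $\sin(\gamma_a + \gamma_b)\sin(\gamma_b - \gamma_a) \leq \sin(\pi/21)\sin(8\pi/21)$. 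Writing $\Delta = \gamma_b - \gamma_a$, the conclusion we seek is that some odd $q \in [3, \pi/(2\Delta)]$ makes the open interval $(q\gamma_a, q\gamma_b)$ contain no multiple of $\pi/2$.

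Next, I would dispatch the two extreme cases immediately. If $\gamma_b \leq \pi/6$, then $[3\gamma_a, 3\gamma_b] \subseteq [0, \pi/2]$, so $q = 3$ already works; symmetrically, if $\gamma_a \geq \pi/3$, then $[3\gamma_a, 3\gamma_b] \subseteq [\pi, 3\pi/2]$ and again $q = 3$ works. In both cases, $\Delta \leq \pi/6$ trivially, so $3 \leq \pi/(2\Delta)$ as required. The crux is the middle case where $\gamma_a < \pi/3$ and $\gamma_b > \pi/6$. In this regime $\gamma_a + \gamma_b$ is bounded away from both $0$ and $\pi$, so $\sin(\gamma_a + \gamma_b)$ is bounded below by an absolute positive constant; Condition~2 therefore forces $\sin\Delta$ (and hence $\Delta$) to be small, making $\pi/(2\Delta)$ large and giving several candidate odd values of $q$ to choose from.

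In this middle case I would argue by contradiction: assume every odd $q \in [3, \pi/(2\Delta)]$ is ``bad,'' so $(q\gamma_a, q\gamma_b)$ contains some multiple $m_q\pi/2$. Each bad $q$ yields a rational $m_q/q \in (2\gamma_a/\pi, 2\gamma_b/\pi)$; two such rationals $m/q$, $m'/q'$ with distinct odd denominators either coincide in lowest terms or differ by at least $1/(qq')$, so having one for every odd $q$ in the range forces a strong lower bound on $\Delta$ by pigeonhole. Combined with the lower bound on $\sin(\gamma_a + \gamma_b)$ from the middle-case assumption, this contradicts Condition~2 provided the constant $\sin(\pi/21)\sin(8\pi/21)$ is calibrated correctly. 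The hard part will be precisely this calibration: the unusual constants $\pi/21$ and $8\pi/21$ almost certainly arise as the exact worst-case boundary of the argument, so pinning down which configuration within the middle case attains the extremal threshold (plausibly one involving small odd $q \in \{3, 5, 7\}$ and denominators related to $21 = 3 \cdot 7$) will require careful case-specific trigonometric estimates rather than a single uniform bound.
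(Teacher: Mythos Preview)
Your reduction to the fundamental quadrant and the treatment of the extreme cases $\gamma_b \le \pi/6$ and $\gamma_a \ge \pi/3$ with $q=3$ are correct and in the spirit of the paper. The gap is in the middle case: the pigeonhole argument you sketch (``if every odd $q\le\pi/(2\Delta)$ is bad, collect the rationals $m_q/q$ and bound their spacing'') is never actually carried out, and as stated it is unlikely to recover the precise threshold $\sin(\pi/21)\sin(8\pi/21)$. A Farey-type spacing bound over an unbounded set of odd denominators does not naturally produce that particular constant, and you yourself flag the calibration as the unresolved hard part. In short, the proposal identifies that the constants must come from a worst case involving $q\in\{3,5,7\}$ but does not supply the mechanism that pins the worst case down.

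The paper's mechanism is direct rather than by contradiction, and it commits at the outset to only three candidates $q\in\{3,5,7\}$. For each such $j$, subdivide $[0,\pi/2]$ into $j$ equal slices and let $f_j(\theta)$ be the distance from the midpoint $\theta=(\gamma_a+\gamma_b)/2$ to the nearest slice endpoint; if the half-width $\epsilon_\theta=(\gamma_b-\gamma_a)/2$ satisfies $\epsilon_\theta\le f_j(\theta)$, then $[j\gamma_a,j\gamma_b]$ lies in a single quadrant. Setting $f_{\max}=\max\{f_3,f_5,f_7\}$, the whole lemma reduces to the one-variable inequality $\epsilon_\theta(\theta)\le f_{\max}(\theta)$, where $\epsilon_\theta(\theta)=\min\bigl(\tfrac12\arcsin(\sin\tfrac{\pi}{21}\sin\tfrac{8\pi}{21}/\sin 2\theta),\,\theta,\,\tfrac{\pi}{2}-\theta\bigr)$ is the largest half-width compatible with Condition~2 at midpoint $\theta$. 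The piecewise-linear function $f_{\max}$ attains its minimum value $\pi/42$ at $\theta=4\pi/21$, and the constant in Condition~2 is chosen exactly so that $\epsilon_\theta(4\pi/21)=\pi/42$ as well; elsewhere the inequality is checked piece by piece. Thus the appearance of $21$ comes from the location of the minimum of the explicit envelope $f_{\max}$, not from any Diophantine argument, and the ``careful case-specific trigonometric estimates'' you anticipate collapse to a single evaluation at the extremal point.
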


\begin{@proof}
Appendix A. 
\end{@proof}

\begin{lemma}\label{lemma:round-termination}
Round $i$ terminates within $N \leq N_i^{max}$ shots, where 
\begin{equation}
    N_i^{max} := \frac{2}{\sin^2\left(\frac{\pi}{21}\right)\sin^2 \left(\frac{8\pi}{21}\right) }\ln \left(\frac{2}{\alpha_i}\right).
\end{equation}

\end{lemma}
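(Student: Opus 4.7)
The plan is to show that by the time $N$ reaches $N_i^{max}$, the Chernoff--Hoeffding confidence half-width $\epsilon_{a_i}$ has shrunk enough that Lemma \ref{lemma:increase-angle} can be applied to the scaled endpoints $K_i \theta_l$ and $K_i \theta_u$; that lemma then guarantees the existence of an odd $K \geq 3 K_i$ within the search range of FindNextK that maintains Invariant (\ref{eq:invariant}), which forces FindNextK to return a strictly larger value of $k$ and the inner loop to exit.

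First I would carry out a direct calculation verifying that the constant in front of $\ln(2/\alpha_i)$ in $N_i^{max}$ is chosen precisely so that, when $N = N_i^{max}$, the quantity $\epsilon_{a_i} = \sqrt{(1/2N)\ln(2/\alpha_i)}$ equals $\tfrac{1}{2}\sin(\pi/21)\sin(8\pi/21)$. Consequently the width of the $a$-confidence interval computed on Lines 16--17 satisfies $a_i^{max} - a_i^{min} \leq \sin(\pi/21)\sin(8\pi/21)$.

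Next I would translate this bound on the $a$-interval into a bound on $|\sin^2(K_i \theta_u) - \sin^2(K_i \theta_l)|$ for the updated $\theta_l, \theta_u$ from Lines 19--20. Because $K_i \theta_l = R_i(\pi/2) + \gamma_i^{min}$ and $K_i \theta_u = R_i(\pi/2) + \gamma_i^{max}$, and because the mapping in Table \ref{table:theta-i-ci} sends $\{\gamma_i^{min}, \gamma_i^{max}\}$ to $\{a_i^{min}, a_i^{max}\}$ (with the assignment determined by the parity of $R_i$), one obtains
\[
    |\sin^2(K_i \theta_u) - \sin^2(K_i \theta_l)| = a_i^{max} - a_i^{min} \leq \sin(\pi/21)\sin(8\pi/21).
\]

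Finally I would apply Lemma \ref{lemma:increase-angle} with $\theta_a := K_i \theta_l$ and $\theta_b := K_i \theta_u$: the positive-gap hypothesis follows from $\theta_u > \theta_l$, the $\sin^2$-bound is the display above, and the same-quadrant hypothesis is exactly Invariant (\ref{eq:invariant}) applied to the current $K_i$. The lemma yields an odd integer $q \in \left[3,\, (\pi/2)/(K_i(\theta_u - \theta_l))\right]$ such that $q K_i \theta_l$ and $q K_i \theta_u$ share a quadrant. Setting $K := q K_i$, we see that $K$ is odd (product of two odd integers), $K \geq 3 K_i$, $K \leq (\pi/2)/(\theta_u - \theta_l)$, and $K$ satisfies Invariant (\ref{eq:invariant}), so $K$ lies in the search range of FindNextK and FindNextK must return a new value; the round then terminates.

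The main obstacle is essentially bookkeeping in the second step: one has to check both parities of $R_i$ and confirm that the Table \ref{table:theta-i-ci} assignment preserves the width of the interval under the $\sin^2$ map on each quadrant. Once that is settled, the first step is a deliberate numerical match between the constant in $N_i^{max}$ and the threshold $\sin(\pi/21)\sin(8\pi/21)$ of Lemma \ref{lemma:increase-angle}, and the third step is a direct appeal to that lemma, so no further subtlety is required.
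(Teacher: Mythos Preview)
Your proposal is correct and follows essentially the same route as the paper: compute that $N = N_i^{max}$ forces $\epsilon_{a_i} = \tfrac{1}{2}\sin(\pi/21)\sin(8\pi/21)$, translate the resulting bound on $a_i^{max}-a_i^{min}$ through Table~\ref{table:theta-i-ci} into a bound on $|\sin^2(K_i\theta_u)-\sin^2(K_i\theta_l)|$, and then invoke Lemma~\ref{lemma:increase-angle} with $\theta_a = K_i\theta_l$, $\theta_b = K_i\theta_u$ to produce an odd $q$ so that $K = qK_i$ is a viable return value for \texttt{FindNextK}. The paper's own proof matches this step for step, including the parity/table bookkeeping you flag as the only real check.
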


\begin{@proof}
FindNextK returns the largest odd integer in the range $[3 K_i, (\pi/2)/(\theta_u - \theta_l)]$
such that
\[\left\lfloor \frac{K \theta_l}{\pi/2} \right\rfloor = \left\lceil \frac{K \theta_u}{\pi/2}\right\rceil - 1.\]
If no such $K$ exists, then FindNextK returns the old value of $K$ (which is $K_i$).
We will show that if $N = N_{max}$, then such a $K$ does in fact exist. This means a new $K$ is found and the algorithm will advance to the next round within the allotted number of shots. 

Plugging in $N_{max}$ for $N$ into the expression in Line 17 in Algorithm 1 for 
$\epsilon_{a_i}$ gives that
$\epsilon_{a_i} = \frac 1 2 \sin (\pi/21) \sin(8 \pi/21)$
and therefore $a_i^{max} - a_i^{min} \le \sin (\pi/21) \sin(8 \pi/21)$.
Furthermore, since $\epsilon_{a_i} > 0$, the values of 
$a_i^{max}$ and $a_i^{min}$ are chosen so that $1 \ge a_i^{max} > a_i^{min} \ge 0$,
which in turn implies that
$\gamma_i^{min}$ and $\gamma_i^{max}$ are chosen in Table \ref{table:theta-i-ci}
so that $\pi/2 \ge \gamma_i^{max} > \gamma_i^{min} \ge 0$
and 
\begin{align*}
|\sin^2 (K_i \theta_u) - \sin^2 (K_i \theta_l) | &= |\sin^2 \gamma_i^{max} - \sin^2 \gamma_i^{min}|\\  
&= a_i^{max} - a_i^{min}.
\end{align*}

Also, since $\pi/2 \ge \gamma_i^{max} > \gamma_i^{min} \ge 0$, the values of
$\theta_u$ and $\theta_l$ are chosen in Lines 21 and 22 so that
\[\left\lfloor \frac{K_i \theta_l}{\pi/2} \right\rfloor = \left\lceil \frac{K_i \theta_u}{\pi/2}\right\rceil - 1.\]
We can now apply Lemma \ref{lemma:increase-angle} with
$\theta_b = K_i \theta_u$ and $\theta_a = K_i \theta_l$.
Therefore, there is an odd integer in the range
from $3$ to $(\pi/2)/(\theta_u - \theta_l)$ such that
\[\left\lfloor \frac{q K_i \theta_l}{\pi/2} \right\rfloor = \left\lceil \frac{q K_i \theta_u}{\pi/2}\right\rceil - 1.\]
Since $q$ and $K_i$ are both odd integers, $q K_i$ is also an odd integer.
Since $q \ge 3$, then $q K_i \ge 3 K_i$.
Since $q \le (\pi/2)/(\theta_b - \theta_a)$,
then $q K_i \le (\pi/2)/(\theta_u - \theta_l)$.
Therefore the integer $q K_i$ is a viable candidate for a new $K$ which will allow the algorithm
to advance to the next round. 
\end{@proof}

\subsection{Proof for the Probability of Success. }

The sequence $K_0, \ldots, K_t$ has the property that each $K_i$ is in the range from $1$ to $K_{max} = \pi/4 \epsilon$
and $K_i \ge 3 K_{i-1}$. The following lemma will assist in the analysis by allowing us to assume  fixed worst-case values
for the $K_i$'s. The proof of Lemma \ref{lemma:K-i-sum} is given in the appendix.

\begin{lemma}\label{lemma:K-i-sum}
Consider a finite sequence $K_0 = 1, K_1 \ldots, K_t$ such that $K_t \le K_{max}$ and $K_i \ge 3 K_{i-1}$ for all $1 \le i \le t$.
Suppose that $f$ is a function that is positive and increasing over the range $[1, K_{max}]$, then
\begin{equation}\label{eq:increasing-sum}
    \sum_{i=1}^t f(K_i) \leq \sum_{i=0}^{t-1} f\left(\frac{K_{max}}{3^{i}}\right).
\end{equation}
\end{lemma}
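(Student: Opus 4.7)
The plan is to exploit the geometric growth condition $K_i \geq 3 K_{i-1}$ to show that each $K_i$ is bounded above by a ``worst-case'' value, and then use monotonicity of $f$ to replace each term on the left-hand side by a term on the right-hand side.

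First I would iterate the inequality $K_j \geq 3 K_{j-1}$ backwards from $j = t$ down to $j = i+1$, which yields $K_t \geq 3^{t-i} K_i$ for every $1 \leq i \leq t$. Combining this with the hypothesis $K_t \leq K_{max}$ gives
\begin{equation*}
K_i \;\leq\; \frac{K_{max}}{3^{t-i}} \qquad \text{for all } 1 \leq i \leq t.
\end{equation*}
Since $f$ is increasing on $[1, K_{max}]$ and all the quantities involved lie in this range (note $K_{max}/3^{t-i} \leq K_{max}$ and $K_i \geq 1$), applying $f$ to both sides preserves the inequality, so $f(K_i) \leq f(K_{max}/3^{t-i})$.

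Summing this termwise from $i = 1$ to $i = t$ gives
\begin{equation*}
\sum_{i=1}^{t} f(K_i) \;\leq\; \sum_{i=1}^{t} f\!\left(\frac{K_{max}}{3^{t-i}}\right),
\end{equation*}
and the change of index $j = t - i$ (so $j$ runs from $t-1$ down to $0$ as $i$ runs from $1$ to $t$) converts the right-hand side into $\sum_{j=0}^{t-1} f(K_{max}/3^j)$, which is exactly the claimed bound.

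I do not expect any real obstacle here: the argument is essentially two lines of bookkeeping, and the only thing to be careful about is verifying that $K_{max}/3^{t-i}$ stays inside the domain $[1, K_{max}]$ on which $f$ is assumed monotonic, which follows from $K_i \geq 1$ (since the sequence starts at $K_0 = 1$ and is nondecreasing) together with $K_i \leq K_{max}/3^{t-i}$.
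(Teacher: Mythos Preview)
Your proposal is correct and follows essentially the same approach as the paper: both prove $K_i \le K_{max}/3^{t-i}$ by iterating the growth condition backward from $K_t \le K_{max}$, then apply monotonicity of $f$ termwise and reindex the resulting sum. The only difference is cosmetic (you phrase the iteration directly, the paper frames it as backward induction), and your explicit check that the arguments stay in $[1,K_{max}]$ is a small improvement in rigor over the paper's version.
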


\begin{proof}
    Appendix B.
\end{proof}

We can now show that the algorithm succeeds with the desired probability. Lemma \ref{lemma:union-bound} 
establishes that with probability at least $1 - \alpha$,
the true value of $\theta_a$ lies within the final interval
$[\theta_l, \theta_u]$, where $\theta_u - \theta_l \le 2 \epsilon$. This in turn implies that the true value of $a = \sin^2 \theta_a$
lies in the interval $[a_l, a_u] = [\sin^2(\theta_l), \sin^2(\theta_u)]$.
Finally, Lemma \ref{lemma:confint} establishes that $[a_l, a_u]$ is sufficiently small to give the desired bound
on the approximation for $a$, namely that $a_u - a_l \le 2 \epsilon$.

\begin{lemma}\label{lemma:union-bound}
With probability $\geq 1 - \alpha$ the algorithm returns an estimate satisfying 
$\theta_a \in [\theta_l, \theta_u]$.
\end{lemma}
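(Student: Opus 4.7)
The plan is to combine a standard union bound with the $\alpha_i$ schedule and the summation estimate provided by Lemma \ref{lemma:K-i-sum}. First I would pin down what ``failure in round $i$'' means: conditioned on having correctly maintained $\theta_a \in [\theta_l,\theta_u]$ through the first $i-1$ rounds, round $i$ updates $[\theta_l,\theta_u]$ using the empirical confidence interval $[a_i^{\min},a_i^{\max}]$ for $\sin^2(K_i\theta_a)$. By the Chernoff–Hoeffding inequality applied to $N$ i.i.d.\ Bernoulli measurements with the choice $\epsilon_{a_i}=\sqrt{\tfrac{1}{2N}\ln(2/\alpha_i)}$ in Line 17, we have $\Pr[\sin^2(K_i\theta_a)\notin[a_i^{\min},a_i^{\max}]]\le\alpha_i$. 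The conversion in Table \ref{table:theta-i-ci} is a bijection between the two intervals on the quadrant determined by $R_i$, so the event $\theta_a\notin[\theta_l,\theta_u]$ after round $i$ has the same probability, namely at most $\alpha_i$.

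Next I would apply the union bound over the (at most $t$) rounds that actually occur. This yields
\begin{equation*}
\Pr[\theta_a\notin[\theta_l,\theta_u]\text{ at termination}] \;\le\; \sum_{i=1}^{t}\alpha_i.
\end{equation*}
Substituting $\alpha_i=\tfrac{2\alpha}{3}\tfrac{K_i}{K_{\max}}$ reduces the task to bounding $\sum_{i=1}^{t}K_i$. Here is where Lemma \ref{lemma:K-i-sum} does the real work: the sequence $K_0=1,K_1,\dots,K_t$ satisfies $K_i\ge 3K_{i-1}$ and $K_t\le K_{\max}$, and the function $f(K)=K$ is positive and increasing on $[1,K_{\max}]$, so
\begin{equation*}
\sum_{i=1}^{t}K_i \;\le\; \sum_{i=0}^{t-1}\frac{K_{\max}}{3^{i}} \;\le\; \frac{3}{2}\,K_{\max},
\end{equation*}
using the geometric series $\sum_{i\ge 0}3^{-i}=3/2$.

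Combining the two estimates gives
\begin{equation*}
\sum_{i=1}^{t}\alpha_i \;\le\; \frac{2\alpha}{3K_{\max}}\cdot\frac{3K_{\max}}{2} \;=\;\alpha,
\end{equation*}
which yields the desired $\Pr[\theta_a\in[\theta_l,\theta_u]]\ge 1-\alpha$. I do not anticipate a serious obstacle: the Chernoff–Hoeffding step is routine given Line 17, the conversion step is set up by Table \ref{table:theta-i-ci} and the argument inside the proof of Lemma \ref{lemma:round-termination}, and the geometric‐sum step is exactly why the schedule $\alpha_i\propto K_i$ was engineered with the prefactor $2/3$. The only subtle point to state carefully is that $t$ itself is a (random) stopping time, so the union bound should be written as summing the conditional failure probabilities of each attempted round; since those bounds hold pathwise in the previous history, the overall unconditional bound $\sum_i\alpha_i\le\alpha$ is valid.
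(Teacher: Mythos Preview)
Your proposal is correct and follows essentially the same route as the paper: condition on success through round $i-1$, use Chernoff--Hoeffding with $\epsilon_{a_i}=\sqrt{\tfrac{1}{2N}\ln(2/\alpha_i)}$ to bound the per-round failure by $\alpha_i$, apply the union bound, and then invoke Lemma~\ref{lemma:K-i-sum} with $f(x)=x$ to turn $\sum_i \alpha_i=\tfrac{2\alpha}{3K_{\max}}\sum_i K_i$ into~$\alpha$ via the geometric series. The paper formalizes the union bound through the chain $\mathbb{P}[\overline{E_t}]\le\sum_i\mathbb{P}[\overline{E_i}\wedge E_{i-1}]\le\sum_i\mathbb{P}[\overline{E_i}\mid E_{i-1}]$, which is exactly the ``sum of conditional failure probabilities'' you describe.
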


\begin{@proof}
There are a total of $t$ rounds in which the index $i$ goes from $1$ to $t$.
Consider the last iteration of the inner loop in  round $i$, and let $\theta_l^{(i)}$ and
$\theta_u^{(i)}$ be the value of $\theta_l$ and $\theta_u$ as assigned in Lines 21 and 22 of the algorithm.
The round ends either because FindNextK succeeds in finding a new value for $K$ or because $\theta_u - \theta_l  < 2 \epsilon$
in which case the algorithm terminates. The final values $\theta_u$ and $\theta_l$ are equal to
$\theta_l^{(t)}$ and
$\theta_u^{(t)}$.
Define $E_i$ to be the event that $\theta_a \in [\theta_l^{(i)}, \theta_u^{(i)} ]$.
Note that the algorithm starts with $\theta^{(0)}_u = \pi/2$ and $\theta^{(0)}_l = 0$,
so $\mathbb{P}[E_0] = 1$.

We will first show that the probability that $\mathbb{P}[\overline{E_i} \mid E_{i-1}] \le \alpha_i$.
At the beginning of round $i$, the current values for $\theta_l$ and
$\theta_u$ are $\theta_l^{(i-1)}$ and $\theta_u^{(i-1)}$, respectively.
Furthermore, as a result of the conditions of 
FindNextK at the end of round $i-1$, 
$\lceil K_i \theta_u / (\pi/2) \rceil - 1 = \lfloor K_i \theta_l / (\pi/2) \rfloor = R_i$.
If $\theta_a \in [ \theta_l^{(i-1)}, \theta_u^{(i-1)}]$, then 
$K_i \theta_a$ can be expressed as $R_i (\pi/2) + \gamma_a$, where
$0 \le \gamma_a \le \pi/2$. 
The measurements $Q^{k_i}\mathcal{A}\ket{0}_n\ket{0}$ return a $1$ with probability
$a_i = \sin^2(K_i \theta_a)$ which is equal to $\sin^2(\gamma_a)$ or $1 - \sin^2 (\gamma_a)$ depending
on whether the quadrant $R_i$ is odd or even.
After $N$ measurements, we have an estimate for $a_i$, which is denoted by $\hat{a}_i$.
By the Chernoff-Hoeffding bound for i.i.d. Bernoulli trials with $N$ samples,  
\begin{equation}\label{eq:ch-bound}
    \mathbb{P}[a_i \not \in [\hat{a}_i - \epsilon, \hat{a}_i + \epsilon]] \leq 2 \exp (-2N \epsilon^2).
\end{equation}
Selecting $\epsilon = \sqrt{(1/2N) \ln(2/\alpha_i)}$ gives a probability at most $\alpha_i$.
The angles $\gamma_{min}$ and $\gamma_{max}$ are chosen so that the interval between
$\sin^2 (\gamma_{min})$ and $\sin^2 (\gamma_{max})$ is equal to 
$[\hat{a}_i - \epsilon, \hat{a}_i + \epsilon]$. Therefore, the probability that
$\gamma_a$ is not in $[\gamma_{min}, \gamma_{min}]$ is at most $\alpha_i$. 
If $\gamma_a \in [\gamma_{min}, \gamma_{min}]$, then assigning
$$\theta_l \leftarrow \frac{R_i (\pi/2) + \gamma_{min}}{K_i}~~~~~~~~\theta_u \leftarrow \frac{R_i (\pi/2) + \gamma_{max}}{K_i}$$
guarantees that $\theta_a \in [\theta_l, \theta_u]$.

Now we can use the union bound to show that the total error probability is bounded by $\alpha$ as required by Algorithm \ref{alg:modified-iqae}. 
\begin{align}
    \mathbb{P}[\theta_a \not \in [\theta_l, \theta_u]] &= \mathbb{P}[\overline{E_t}]\\  &\leq \mathbb{P}[\exists i \in \{1, \ldots, t\} : \overline{E_i} \wedge E_{i-1}] \\ 
    &\leq \sum_{i=1}^t \mathbb{P} [\overline{E_i} \wedge  E_{i-1}] \\ 
    &\leq \sum_{i=1}^t \mathbb{P} [\overline{E_i} \mid  E_{i-1}]  \leq \sum_{i=1}^t \alpha_i
\end{align}
Now plugging in the chosen values for $\alpha_i$ and using Lemma \ref{lemma:K-i-sum} with $f(x) = x$, we get that 
\begin{align}
    \sum_{i=1}^t \alpha_i ~~&=~~ \frac{2\alpha}{3K_{max}} \sum_{i=1}^t K_i \\ ~~&\le~~ \frac{2\alpha}{3K_{max}} \sum_{i=0}^{t-1} \frac{K_{max}}{3^{i}} \\ 
~~&=~~ \frac{2\alpha}{3} \sum_{i=0}^{t-1} \frac{1}{3^{i}} \\ ~~&\leq~~ \frac{2\alpha}{3} \frac{3}{2} ~~=~~ \alpha
\end{align}

Therefore the probability of failure is bounded above by $\alpha$. 
\end{@proof}

\begin{lemma}\label{lemma:confint}
A confidence interval for $\theta_a$ with width at most $2\epsilon$ is sufficient to get a confidence interval for $a$ with width at most $2\epsilon$. 
\end{lemma}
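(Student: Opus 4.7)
The plan is to reduce the claim to an elementary inequality about the function $\sin^2$. Given $[\theta_l, \theta_u]$ with $\theta_u - \theta_l \le 2\epsilon$ and $a = \sin^2(\theta_a)$, the algorithm sets $a_l = \sin^2(\theta_l)$ and $a_u = \sin^2(\theta_u)$, so I only need to upper bound $a_u - a_l$.

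First, I would note that $\theta_l, \theta_u \in [0, \pi/2]$, so $\sin^2$ is monotone increasing on the relevant interval and therefore $a_u \ge a_l$. Then I would expand the difference using the trigonometric identity
\begin{equation*}
\sin^2(\theta_u) - \sin^2(\theta_l) = \sin(\theta_u - \theta_l)\,\sin(\theta_u + \theta_l).
\end{equation*}
Since $|\sin(\theta_u + \theta_l)| \le 1$ and $\sin(x) \le x$ for $x \ge 0$, this gives
\begin{equation*}
a_u - a_l \le \sin(\theta_u - \theta_l) \le \theta_u - \theta_l \le 2\epsilon,
\end{equation*}
which is exactly the desired bound. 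Equivalently, one could invoke the mean value theorem using $\tfrac{d}{d\theta}\sin^2(\theta) = \sin(2\theta)$ which is bounded by $1$ in absolute value.

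Finally, I would remark that since $\theta_a \in [\theta_l, \theta_u]$ (from Lemma \ref{lemma:union-bound}) and $\sin^2$ is monotone on $[0, \pi/2]$, we also have $a = \sin^2(\theta_a) \in [a_l, a_u]$, so $[a_l, a_u]$ is indeed a valid confidence interval for $a$ of width at most $2\epsilon$. There is no real obstacle here; the entire lemma is a one-line calculation once the right trigonometric identity (or derivative bound) is invoked, and the only thing to be careful about is noting that the sine arguments lie in a range where the sign and monotonicity behave as expected.
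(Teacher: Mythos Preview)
Your proposal is correct and follows essentially the same route as the paper: both arguments use the identity $\sin^2(\theta_u)-\sin^2(\theta_l)=\sin(\theta_u+\theta_l)\sin(\theta_u-\theta_l)$ together with $|\sin(\theta_u+\theta_l)|\le 1$ and $|\sin x|\le |x|$ to conclude $|a_u-a_l|\le |\theta_u-\theta_l|\le 2\epsilon$. Your additional remarks about monotonicity of $\sin^2$ on $[0,\pi/2]$ and the alternative via the mean value theorem are fine but not needed beyond what the paper does.
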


\begin{@proof}
Given a confidence interval $[a_l, a_u]$ for $a$, we can derive a relation to the confidence interval for the parameter $\theta_a$ using $a = \sin^2(\theta_a)$, and defining $\sin^2(\theta_l) := a_l$ and $\sin^2(\theta_u) := a_u$. 

\begin{align*}
    \frac{|a_u - a_l|}{2} &= \frac{|\sin(\theta_u +\theta_l)||\sin(\theta_u - \theta_l)|}{2} \\ 
    &\leq \frac{|\theta_u - \theta_l|}{2}.
\end{align*}

So to achieve a confidence interval for $a$ that is smaller than $\epsilon$, i.e. $|a_u - a_l|/2 \leq \epsilon$, it suffices to achieve $|\theta_u - \theta_l|/2 \leq \epsilon$ for our estimate of $\theta_a$.
\end{@proof}

\subsection{Bounding the Total Number of Queries. }

We have proven that the algorithm does indeed succeed with the desired probability. In the next lemma we prove an upper bound on the total number of queries required, which concludes the proof of Theorem \ref{thm:algorithm}.

\begin{lemma}
Algorithm 1 requires at most $\frac{62}{\epsilon}\ln\frac{6}{\alpha}$ applications of $A$, achieving the desired query complexity of $O\left(\frac{1}{\epsilon} \log \frac{1}{\alpha}\right)$. 
\end{lemma}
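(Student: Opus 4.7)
The plan is to bound the total number of queries by summing over rounds the product of $K_i$ (the per-shot query cost up to a constant) and $N_i^{max}$ (the maximum shots per round), then apply Lemma \ref{lemma:K-i-sum} to collapse the sum. First I would observe that in round $i$, each shot requires $O(K_i)$ applications of $\mathcal{A}$ (specifically, $K_i$ applications of $\mathcal{A}$, or equivalently $(K_i-1)/2$ applications of $Q$), and at most $N_i^{max}$ shots are taken. Substituting $\alpha_i = \frac{2\alpha K_i}{3K_{max}}$ into $N_i^{max}$ gives
\begin{equation*}
N_i^{max} = C \ln\!\left(\frac{3K_{max}}{\alpha K_i}\right), \qquad C = \frac{2}{\sin^2(\pi/21)\sin^2(8\pi/21)}.
\end{equation*}
So the total cost is bounded by (a constant times) $\sum_{i=1}^{t} K_i \ln\!\big(\tfrac{3K_{max}}{\alpha K_i}\big)$.

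Next, define $f(x) = x \ln(3K_{max}/(\alpha x))$ on $[1, K_{max}]$. I would compute $f'(x) = \ln(3K_{max}/(\alpha x)) - 1 = \ln(3K_{max}/(\alpha e x))$, which is nonnegative on $[1, K_{max}]$ whenever $\alpha e x \le 3 K_{max}$; at $x = K_{max}$ this reduces to $\alpha \le 3/e$, which holds since $\alpha \le 1$. Since $f$ is also positive on this interval, Lemma \ref{lemma:K-i-sum} applies, giving
\begin{equation*}
\sum_{i=1}^{t} f(K_i) \;\le\; \sum_{i=0}^{t-1} f\!\left(\frac{K_{max}}{3^i}\right) \;=\; \sum_{i=0}^{t-1} \frac{K_{max}}{3^i} \ln\!\left(\frac{3^{i+1}}{\alpha}\right).
\end{equation*}

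Then I would split the logarithm and use the standard geometric identities $\sum_{i\ge 0} 3^{-i} = 3/2$ and $\sum_{i\ge 0}(i+1)3^{-i} = 9/4$ to get
\begin{equation*}
\sum_{i=1}^{t} f(K_i) \;\le\; K_{max}\!\left[\tfrac{3}{2}\ln(1/\alpha) + \tfrac{9\ln 3}{4}\right].
\end{equation*}
The small algebraic trick is to fold the constant into the log: since $\tfrac{9\ln 3}{4} = \tfrac{3}{2}\ln(3\sqrt{3})$ and $3\sqrt{3} < 6$, the bracket is at most $\tfrac{3}{2}\ln(6/\alpha)$. Multiplying by the per-shot constant and substituting $K_{max} = \pi/(4\epsilon)$ yields the claimed $\tfrac{62}{\epsilon}\ln(6/\alpha)$ bound (numerically $\tfrac{3\pi}{16\sin^2(\pi/21)\sin^2(8\pi/21)} \approx 61.2$ when counting queries to $Q$).

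The main obstacle is the monotonicity check on $f$, which is what legitimizes invoking Lemma \ref{lemma:K-i-sum}; the rest is routine series arithmetic. A minor but necessary care point is the choice to bound $\tfrac{9\ln 3}{4}$ by $\tfrac{3}{2}\ln 6$ rather than leaving the additive constant exposed, so that the bound comes out cleanly in the form $\frac{1}{\epsilon}\ln(1/\alpha)$ with no additional $\log\log$-type terms — this is precisely where the geometric decay of $K_i$ (guaranteed by the factor-of-three increase each round) combines with the schedule $\alpha_i \propto K_i$ to eliminate the extra logarithmic factor present in the original IQAE analysis.
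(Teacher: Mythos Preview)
Your proposal is correct and follows essentially the same route as the paper: bound the total queries by $\sum_i K_i N_i^{max}$, substitute $\alpha_i$, verify that $f(x)=x\ln(3K_{max}/(\alpha x))$ is increasing on $[1,K_{max}]$ so Lemma~\ref{lemma:K-i-sum} applies, and then sum the resulting geometric-type series. The only cosmetic differences are that the paper splits $\ln(3^{i+1}/\alpha)=\ln(3/\alpha)+i\ln 3$ (using $\sum_{i\ge 0} i\,3^{-i}\le 3/4$) rather than your $\ln(1/\alpha)+(i+1)\ln 3$, and writes the constant inside the log as $\sqrt{27}$ before rounding to $6$; the arithmetic and final numerical constant agree with yours.
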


\begin{@proof}
The query complexity is defined as the number of times the operator $Q$ is applied. Each circuit $Q^{k_i}\mathcal{A}\left|0\right>_n\left|0\right>$ applies the operator $Q$ a total of $k_i$ times, and in each round we measure the state created by this state at most $N_i^{max}$ times. For simplicity, we define the constant $C = 1/\left(\sin^2\left(\pi/21 \right)\sin^2\left(8\pi/21 \right)\right)$.

\begin{align}
    N_{queries} &\leq \sum_{i=1}^t k_iN_i^{max} \\ 
    &< \frac{1}{2} \sum_{i=1}^t K_iN_i^{max} \\ 
    &= C\sum_{i=1}^t K_i\ln\left(\frac{2}{\alpha_i}\right) \\ 
    &= C\sum_{i=1}^t K_i\ln\left(\frac{3K_{max}}{K_{i}\alpha}\right) \label{eq:sum-to-upperbound}
\end{align} 
To invoke Lemma \ref{lemma:K-i-sum}, notice that $f(x) = x\ln\left(\frac{c}{x}\right)$ is increasing for all $x \leq c/e$: 
\begin{align}
    x \leq \frac{c}{e} \Rightarrow 1 \leq \ln\left(\frac{c}{x}\right) \Rightarrow f'(x) &= \ln\left(\frac{c}{x}\right) - 1 \geq 0.
\end{align}
To apply the lemma to the expression in Equation (\ref{eq:sum-to-upperbound}), we set $c = 3 K_{max}/\alpha$ and note that for $1 \le K_i \le K_{max}$, $K_i \le 3 K_{max}/\alpha e$.
So we can use Lemma \ref{lemma:K-i-sum} to upperbound the sum:

\begin{figure*}[!ht]
    \centering
    \includegraphics[width=.85\linewidth]{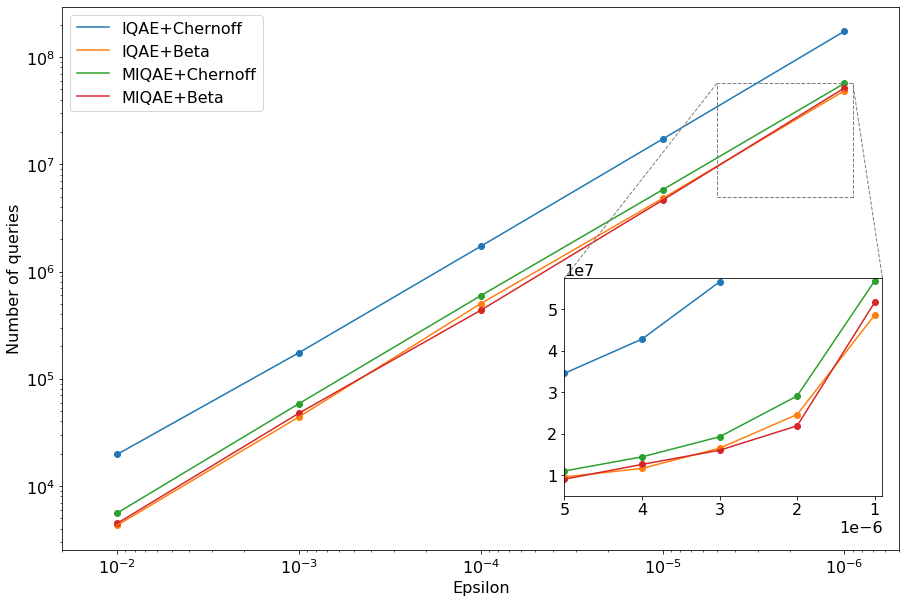}\\
    \caption{\em Plots comparing the target precision $\epsilon$ with number of queries made to $\mathcal{A}$. The main graph is drawn in a log-log plot over the full range of $\epsilon$'s tested, and the close-up uses a log-lin plot to increase the visibility of the respective lines. As described in the main text, the amplitudes are averaged over varied amplitudes. The other two hyperparameter selections are $\alpha = 0.05$, and $N_\text{shots} = 100$.}
    \label{fig:results}
\end{figure*}

\begin{figure*}[!ht]
    \centering
    \includegraphics[width=.85\linewidth]{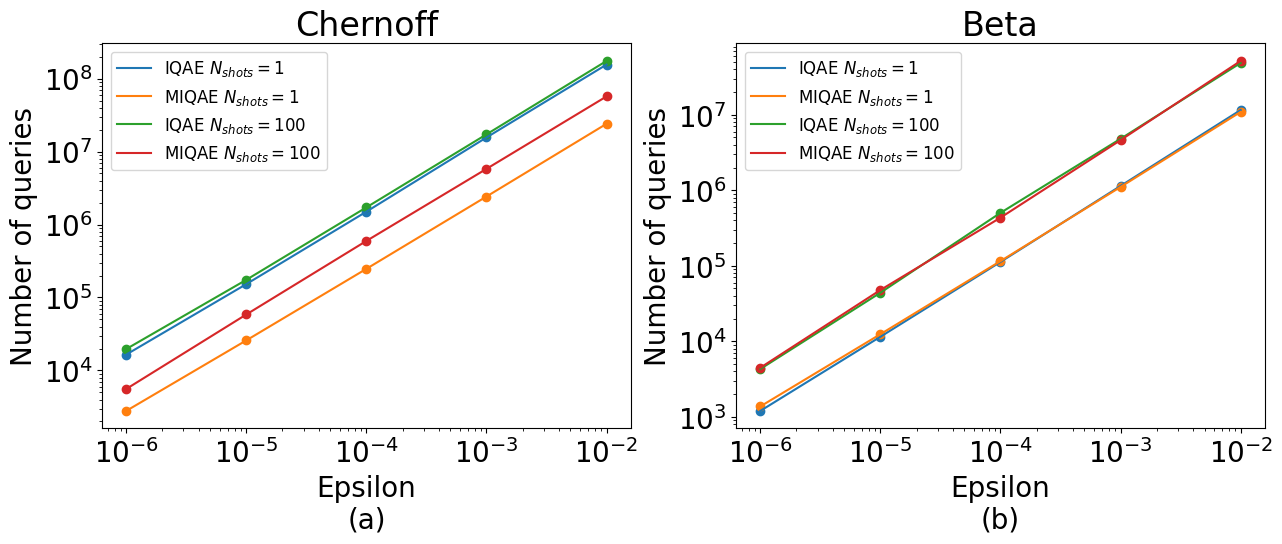}\\
    \caption{\em Experimental results comparing the numerical query complexity when varying the hyperparameter $N_{shots}$. Regardless of the confidence interval computation method of choice, we see that there is a large improvement in numerical performance across the IQAE variants tested. In (a), the comparison is made between the two variants using the Chernoff-Hoeffding bound, whereas in (b) we used the Clopper-Pearson method (Beta). A greater improvement is seen with the Chernoff method for MIQAE, while the improvement is comparable when using the Beta method.}
    \label{fig:results-nshots-1}
\end{figure*}

\begin{align}
 & C\sum_{i=1}^t K_i\ln\left(\frac{3K_{max}}{K_{i}\alpha}\right) \\
 & \leq C \sum_{j=1}^{t-1} \frac{K_{max}}{3^j} \ln\left(\frac{3^{j+1}K_{max}}{K_{max}\alpha}\right) \\ 
    &= C \sum_{j=0}^{t-1} \frac{K_{max}}{3^j} \ln\left(\frac{3^{j+1}}{\alpha}\right) \\
    &= CK_{max}\left( \ln\left(\frac{3}{\alpha}\right) \sum_{j=0}^{t-1}  \frac{1}{3^j} + \sum_{j=0}^{t-1} \frac{1}{3^j} \ln\left(3^{j}\right)  \right) \\
    &= CK_{max}\left( \ln\left(\frac{3}{\alpha}\right) \sum_{j=0}^{t-1}  \frac{1}{3^j} + \ln\left(3 \right)\sum_{j=0}^{t-1} j \frac{1}{3^j}   \right) \\
    &\leq CK_{max}\left(\ln\left(\frac{3}{\alpha}\right) \frac{3}{2} + \frac{3 \ln 3}{4} \right) \\
    &= \frac{3}{2} CK_{max} \ln \left( \frac{\sqrt{27}}{\alpha} \right) \\ 
    &= \frac{3\pi C}{8} \frac{1}{\epsilon}\ln \left( \frac{\sqrt{27}}{\alpha} \right) \label{eq:query-complexity-exact}\\
    &\approx \frac{62}{\epsilon} \ln \left( \frac{6}{\alpha} \right) \\
    &= O\left(\frac{1}{\epsilon}\ln \left( \frac{1}{\alpha}\right)\right)
\end{align}

\end{@proof}

\subsection{Absolute and Relative Error. }

Algorithm \ref{alg:modified-iqae} can be modified to output an approximation of $a$ to within a $(1 + \epsilon)$ relative error as well, by adding an extra $1/a$ overhead. To accomplish this, an extra outer loop is added to Algorithm \ref{alg:modified-iqae} such that the algorithm will terminate when $a > 0.5$. If the algorithm completes the inner loop without achieving our desired approximation, the upper bound is continually halved until the algorithm successfully approximates the targeted amplitude to the correct precision. Details of the modification and proof are outlined in Appendix C.

\section{Results}

\definecolor{mplblue}{HTML}{1f77b4}
\definecolor{mplorange}{HTML}{ff7f0e}
\definecolor{mplgreen}{HTML}{2ca02c}
\definecolor{mplred}{HTML}{d62728}

\begin{figure*}[!ht]
\centering
     \begin{subfigure}{0.3\textwidth}
         \centering
         \includegraphics[width=.95\textwidth]{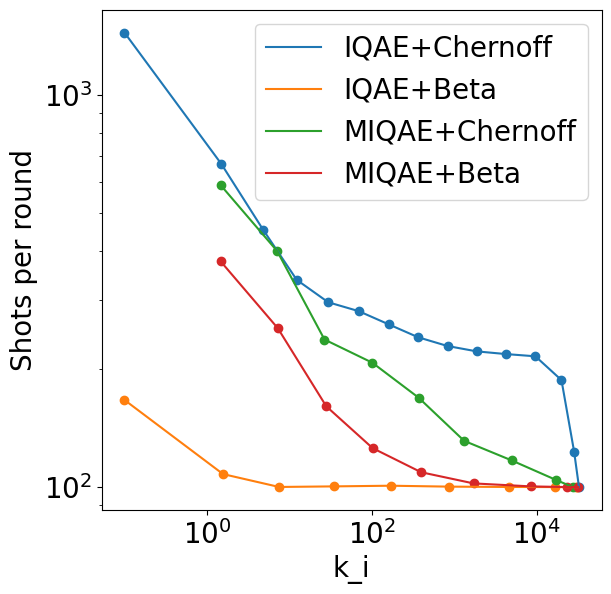}
         \caption{\em Number of measurements used as a function of $k_i$.}
     \end{subfigure}
     \hfill
     \begin{subfigure}{0.3\textwidth}
         \centering
         \includegraphics[width=.95\textwidth]{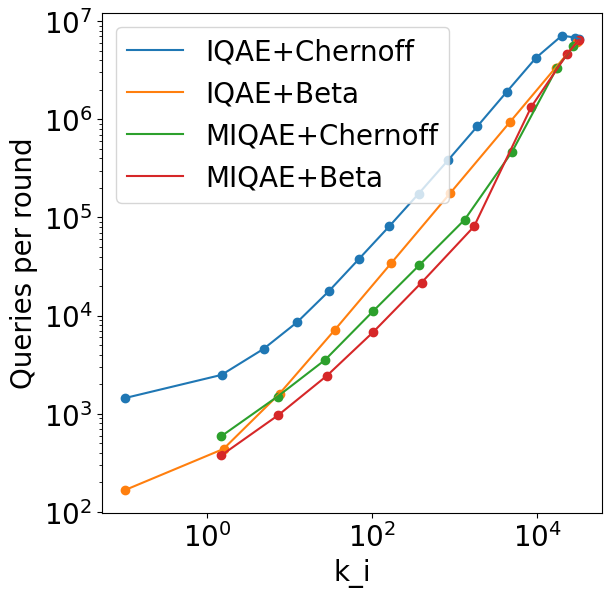}
         \caption{\em Number of queries used as a function of $k_i$.}
         \label{fig:queries_per_ki}
     \end{subfigure}
     \hfill
     \begin{subfigure}{0.3\textwidth}
         \centering
         \includegraphics[width=.95\textwidth]{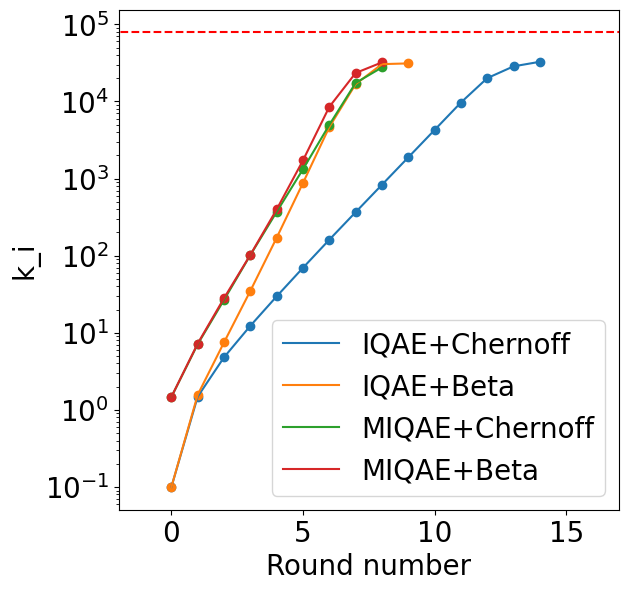}
         \caption{\em Scaling of $k_i$ as a function of the round number.}
         \label{fig:ki_per_round_num}
     \end{subfigure}
    \caption{\em Per-round results from running the algorithms with $\epsilon = 10^{-5}$, averaged over 10,000 runs. The red dotted line in the rightmost graph represents $k_{\text{max}}$.}
    \label{fig:results-per-round}
\end{figure*}

In this section, we present the results of Algorithm \ref{alg:modified-iqae}. We are particularly interested in visualizing the empirical query complexity, and comparing our performance to the original IQAE algorithm. Our code can be found at https://github.com/shifubear/Modified-IQAE. 

Our experiments were run over the 3-tuples of parameters from the Cartesian product of the sets $\{\texttt{epsilons},~\texttt{amplitudes},~\texttt{confint\_methods}\}$. Here, $\texttt{epsilons} = \{10^{-2}, 10^{-3}, \cdots, 10^{-6}\}$, $\texttt{amplitudes} = \{0, \frac{1}{16}, \cdots, 1\}$, and $\texttt{confint\_methods} = \{\texttt{chernoff}, \texttt{beta}\}$. The term {\tt beta}
refers to  the Clopper-Pearson method for determining the confidence interval. The confidence level $\alpha$ is set to $.05$ for all experiments.

Figure \ref{fig:results} shows the query complexity of the four IQAE algorithms: Modified IQAE (MIQAE) and the original IQAE with the two difference methods for determining the confidence interval (Chernoff and Beta).
Experiments were run using the parameter space described above.
The plot shows the total number of queries as a function of the precision parameter  epsilon. To discount any anomalies that arise from random sampling, we repeat each parameter combination 10,000 times and average the total number of queries per $\epsilon$. We also average the results over all amplitudes, since their query complexity should not change based on the ground-truth amplitude. While testing the original IQAE algorithm, we found that amplitudes expressed using powers of 2 (i.e. $0$, $\frac12$ ...) were noticeably easier to estimate. We slightly perturb the ground truth amplitude by sampling $\Tilde{a} \sim \mathcal{N}(a, \frac{1}{320})$. Here, the standard deviation is chosen such that $\Tilde{a}$ does not vary more than 5\% of our 1/16 step distance from $a$. This reduces any biases in performance due to ground-truth amplitudes that use powers of 2.

Using the Chernoff confidence interval, there is a distinct improvement in both the query complexity and the total number of queries used. The modified IQAE algorithm (\textcolor{mplgreen}{green}) always makes fewer queries by about half an order of magnitude, when compared to the original (\textcolor{mplblue}{blue}). 

On the log-log scale, the plotted query complexity for \texttt{MIQAE+Beta} and \texttt{IQAE+Beta} are nearly identical. For finer detail, a second set of error tolerances $\{10^{-4}, 7.5\times10^{-5}, \cdots , 10^{-6}\}$ is also tested with the same amplitudes. On a linear-log scale, the original algorithm using the Clopper-Pearson (Beta) interval (\textcolor{mplorange}{orange}) appears to use marginally less queries for higher values of $\epsilon$.

It is possible that the optimal value for each intermediate confidence level $\alpha_i$ using Clopper-Pearson intervals takes a different form, since $\alpha_i$ is specifically designed to optimize the Chernoff confidence interval. However, an analytic proof using the beta distribution is intractable \cite{grinko2021a} and we leave these potential improvements to be discovered.

In Figure \ref{fig:results-nshots-1}, we calculate the numerical query complexity when varying the hyperparameter $N_{shots}$. As described in an earlier section, this parameter determines how many samples we take from the quantum computer in a given iteration before using the results to recompute a confidence interval. As demonstrated in the plot, there is a substantial improvement for both IQAE and MIQAE regardless of the method of computing the confidence interval, when $N_{shots} = 1$. This is true even when taking into account IQAE's no-overshooting condition, showing that this sampling method is the most general way to extend that heuristic. Because of this, unless there are some resource constraints around how the quantum computer and classical computer communicate that would benefit from batch sampling, we suggest setting this parameter to 1 for the lowest query complexity. 

Furthermore, there is a substantial improvement to the query complexity when using MIQAE with the Chernoff-Hoeffding bound as demonstrated in Figure \ref{fig:results-nshots-1} (a). Since the analysis in the paper was centered around this case, this may suggest that our algorithm doesn't necessitate the no-overshooting condition proposed in IQAE \cite{grinko2021a} and ChebAE \cite{rall2022}. Instead, the gradual relaxation for the confidence interval requirement implicitly serves this purpose, leading to an improved numerical performance. We comment that this may mean a careful analysis of the algorithm using the Clopper-Pearson (Beta) method could lead to a similar improvement using that technique. 

We also analyze the behavior of the algorithms in each individual round. To accomplish this, we record the number of shots $N$, the Grover power $k_i$, and the number of queries made at each $k_i$. As shown in Figure \ref{fig:results-per-round}, the number of queries per $k$ increases at the same rate for all combinations of algorithms and confidence intervals. Therefore, the modified algorithm must reduce the total queries made by using fewer shots at higher values of $k_i$, as Aaronson and Rall observed $\cite{aaronson2020}$. This behavior can also be experimentally verified in Figure \ref{fig:results-per-round}, where the modified IQAE uses progressively fewer shots for higher values of $k_i$. 

\begin{figure*}[!ht]
    
    \begin{subfigure}{.48\textwidth}
        \centering
        \includegraphics[width=\linewidth]{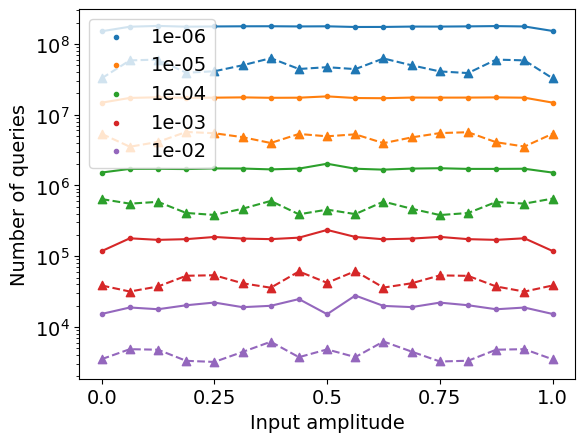}
        \label{fig:results-query-a_a}
        \caption{Original IQAE}
    \end{subfigure}
    \begin{subfigure}{.48\textwidth}
        \centering
        \includegraphics[width=\linewidth]{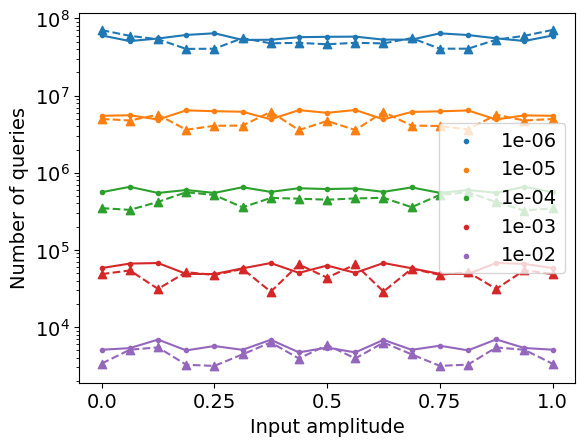}
        \label{fig:results-query-a_b}
        \caption{Modified IQAE}
    \end{subfigure}
    \caption{\em Query count vs. amplitude for each combination of algorithm and confidence interval method, recorded at different values of $\epsilon$ described in the legends. The solid lines and dotted lines correspond to the Chernoff-Hoeffding and Clopper-Pearson confidence intervals respectively. On average, the number of queries do not depend on the input amplitude in either algorithm.}
    \label{fig:results-query-a}
\end{figure*}

\begin{figure}[!ht]
    \centering
    \includegraphics[width=\linewidth]{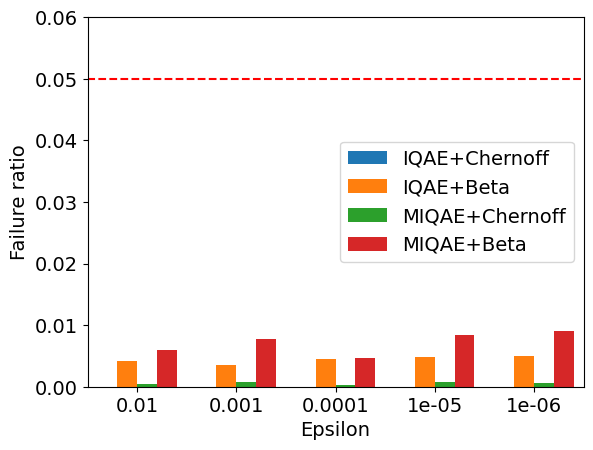}
    \caption{\em The frequency of incorrect estimations (true $a$ not within the estimated confidence interval) using the parameter space described in the beginning of this section. All frequencies are well below the dotted line representing $\alpha = .05$.} 
    \label{fig:results-failures}
\end{figure}

Figure \ref{fig:results-per-round}(c) illustrates the values of the Grover power $k$ in each round. In all runs, the value of $k$ never surpasses $k_{\text{max}}$, supporting the correctness of the IQAE algorithm. $k$ increases most slowly between rounds with \texttt{IQAE+Chernoff}, whereas all other configurations increase their $k$'s at similarly fast rates. This explains the results that are shown in each configuration's query complexity (Figure \ref{fig:results}), where the \texttt{IQAE+Chernoff} uses the most queries and all other configurations use a similar number of queries. Evidently, the evolution of $k$ is central to the algorithm's overall performance. However, it remains unclear whether this relationship is responsible for the improvement seen in the other algorithms over \texttt{IQAE+Chernoff}, or a side effect of controlling the evolution of $N$.

Figure \ref{fig:results-query-a} plots the number of queries made for each 3-tuple of experiment parameters against the ground-truth input amplitude. Like Figure \ref{fig:results}, the results are averaged across 10,000 runs. The number of queries for any amplitude generally stays the same in both the modified and original algorithms for a specific $\epsilon$. The query count for algorithms using Clopper-Pearson intervals appears to have higher variance between amplitudes, but still averages out to the same level trend. The performance of the Chernoff interval also closely matches that of the Clopper-Pearson interval in our modified algorithm. When using the original IQAE algorithm, however, there is a significant gap between the algorithm's query count depending on which confidence interval method is used. These observations justify our claim that input amplitudes do not affect the algorithm's query complexity.

Finally, we verified that the algorithms return correct estimations with the desired frequency. According to the proofs, both IQAE algorithms should return an interval that contains the ground-truth amplitude with probability at least $1-\alpha$. In other words, both algorithms should return an interval that does not contain the true amplitude
with frequency at most $\alpha$. We use our previous data, which tests all error tolerance, amplitude, and confidence interval combinations using $\alpha = .05$. We plot the number of failures for a specific amplitude and confidence interval method for each input $\epsilon$ in Figure \ref{fig:results-failures}. Overall, the modified algorithm has a higher failure rate than the original, but all failure rates are clearly well below the theoretical bounds.

\section{Conclusion and Future Work}

In this paper, we show that a modification to Grinko et al.'s IQAE algorithm produces an asymptotically optimal QFT-free quantum amplitude estimation algorithm, and furthermore demonstrate experimentally that this modification is not only a theoretical improvement, but makes the algorithm more suitable to run in the upcoming NISQ era. 
Future work may include experimenting with this algorithm on real quantum hardware, and analyzing its performance in the presence of noise and noise mitigation techniques.

\bibliographystyle{unsrt}
\bibliography{references}

\begin{thebibliography}{10}

\bibitem{grinko2021a}
Dmitry Grinko, Julien Gacon, Christa Zoufal, and Stefan Woerner.
\newblock Iterative quantum amplitude estimation.
\newblock {\em npj Quantum Information}, 7(1):52, December 2021.

\bibitem{preskill2018}
John Preskill.
\newblock Quantum {{Computing}} in the {{NISQ}} era and beyond.
\newblock {\em Quantum}, 2:79, August 2018.

\bibitem{brassard2002}
Gilles Brassard, Peter H{\o}yer, Michele Mosca, and Alain Tapp.
\newblock Quantum amplitude amplification and estimation.
\newblock In Samuel~J. Lomonaco and Howard~E. Brandt, editors, {\em
  Contemporary {{Mathematics}}}, volume 305, pages 53--74. {American
  Mathematical Society}, {Providence, Rhode Island}, 2002.

\bibitem{grover1996}
Lov~K. Grover.
\newblock A fast quantum mechanical algorithm for database search.
\newblock In {\em Proceedings of the Twenty-Eighth Annual {{ACM}} Symposium on
  {{Theory}} of Computing - {{STOC}} '96}, pages 212--219, {Philadelphia,
  Pennsylvania, United States}, 1996. {ACM Press}.

\bibitem{woerner2019}
Stefan Woerner and Daniel~J. Egger.
\newblock Quantum risk analysis.
\newblock {\em npj Quantum Information}, 5(1):15, December 2019.

\bibitem{egger2019}
Daniel~J. Egger, Ricardo~Gac{\'i}a Guti{\'e}rrez, Jordi~Cahu{\'e} Mestre, and
  Stefan Woerner.
\newblock Credit {{Risk Analysis}} using {{Quantum Computers}}.
\newblock {\em arXiv:1907.03044 [quant-ph]}, 2019.

\bibitem{rebentrost2018}
Patrick Rebentrost, Brajesh Gupt, and Thomas~R. Bromley.
\newblock Quantum computational finance: {{Monte Carlo}} pricing of financial
  derivatives.
\newblock {\em Physical Review A}, 98(2):022321, August 2018.

\bibitem{stamatopoulos2020}
Nikitas Stamatopoulos, Daniel~J. Egger, Yue Sun, Christa Zoufal, Raban Iten,
  Ning Shen, and Stefan Woerner.
\newblock Option {{Pricing}} using {{Quantum Computers}}.
\newblock {\em Quantum}, 4:291, July 2020.

\bibitem{zoufal2019}
Christa Zoufal, Aur{\'e}lien Lucchi, and Stefan Woerner.
\newblock Quantum {{Generative Adversarial Networks}} for learning and loading
  random distributions.
\newblock {\em npj Quantum Information}, 5(1):103, December 2019.

\bibitem{montanaro2015}
Ashley Montanaro.
\newblock Quantum speedup of {{Monte Carlo}} methods.
\newblock {\em Proceedings of the Royal Society A: Mathematical, Physical and
  Engineering Sciences}, 471(2181):20150301, September 2015.

\bibitem{aaronson2020}
Scott Aaronson and Patrick Rall.
\newblock Quantum {{Approximate Counting}}, {{Simplified}}.
\newblock {\em arXiv:1908.10846 [quant-ph]}, January 2020.

\bibitem{suzuki2020}
Yohichi Suzuki, Shumpei Uno, Rudy Raymond, Tomoki Tanaka, Tamiya Onodera, and
  Naoki Yamamoto.
\newblock Amplitude estimation without phase estimation.
\newblock {\em Quantum Information Processing}, 19(2):75, February 2020.

\bibitem{wie2019}
Chu-Ryang Wie.
\newblock Simpler quantum counting.
\newblock {\em Quantum Information and Computation}, 19(11\&12):967--983,
  September 2019.

\bibitem{rall2022}
Patrick Rall and Bryce Fuller.
\newblock Amplitude {{Estimation}} from {{Quantum Signal Processing}}.
\newblock 2022.

\bibitem{zhao2022a}
Yunpeng Zhao, Haiyan Wang, Kuai Xu, Yue Wang, Ji~Zhu, and Feng Wang.
\newblock Adaptive {{Algorithm}} for {{Quantum Amplitude Estimation}}.
\newblock 2022.

\bibitem{plekhanov2022}
Kirill Plekhanov, Matthias Rosenkranz, Mattia Fiorentini, and Michael Lubasch.
\newblock Variational quantum amplitude estimation.
\newblock {\em Quantum}, 6:670, March 2022.

\bibitem{clopper1934}
C.~J. Clopper and E.~S. Pearson.
\newblock {{THE USE OF CONFIDENCE OR FIDUCIAL LIMITS ILLUSTRATED IN THE CASE OF
  THE BINOMIAL}}.
\newblock {\em Biometrika}, 26(4):404--413, 1934.

\end{thebibliography}

\newpage

\newpage
\section{Appendix A: Proof of Lemma \ref{lemma:increase-angle}}

\begin{figure}[!b]
    \centering
    \includegraphics{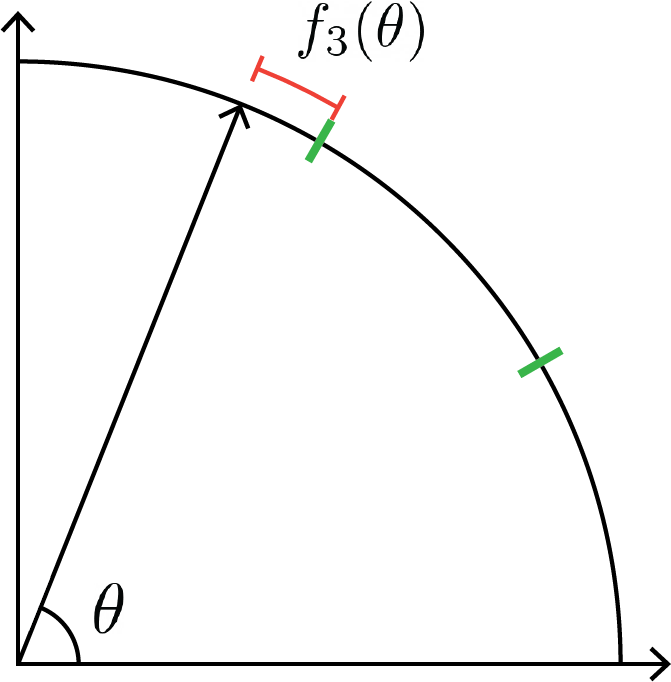}
    \caption{\em An illustration describing the quantity computed by $f_3$. We only show the first quadrant for illustration purposes. The circular arc is subdivided into 3 sections (green markers). $f_3(\theta)$ returns the angle corresponding to the red arc, describing the angular distance to the nearest subdivided marker from $\theta$.}
    \label{fig:f3}
\end{figure}

This proof follows very closely from the proof of Lemma 1 in \cite{grinko2021a}. 

\begin{@proof}
This lemma characterizes the conditions in which a pair of angles that lie in the same quadrant can be multiplied by an odd integer $q$ such that the resulting angle still lies in a single quadrant. 

We begin by describing a geometric intuition on when such an odd integer can be found. Consider a circle in which each quadrant has been subdivided into $j \in [3, 5, 7]$ slices. If the current interval $[\theta_a, \theta_b]$ lies within one of these slices, then multiplying that interval by $j$ will guarantee that the scaled interval will still lie inside of a quadrant. 

We now define a set of piecewise-linear functions $f_j$ on $[0, \pi/2]$ for integers $j \in [3, 5, 7]$ to capture this geometric intuition as follows. There are a total of $j$ pieces, and the $m$-th piece where $m \in \{1, \ldots, j\}$ is constructed on the interval $\left[\frac{(m-1)\pi}{j}, \frac{m\pi}{j}\right]$. On this interval, the function of interest is defined as 

\begin{figure*}[!ht]
    \centering
    \includegraphics[width=\textwidth]{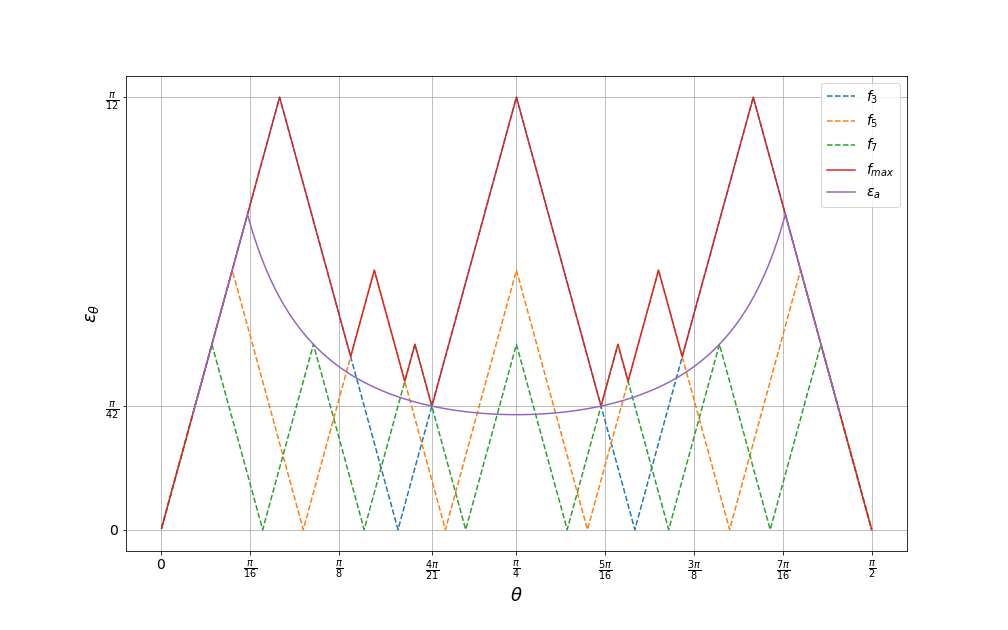}
    \caption{\em A plot of the functions $f_3$, $f_5$, $f_7$, $f_{max}$, and $\epsilon_\theta$. By construction, $\epsilon_\theta$ is a valid lower bound to $f_{max}$, with an intersection at $\theta = \frac{4\pi}{21}$. }
    \label{fig:fmaxeps}
\end{figure*}

\begin{equation}
    f_j(\theta) := \min\left\{ \theta - \frac{(m-1)\pi}{2j}, \frac{m\pi}{2j} - \theta\right\}.
\end{equation}

Figure \ref{fig:f3} illustrates the quantity being computed in the case that $j = 3$, and similar images can be used to visualize the case for other values of $j$ as well. This set of functions characterizes the closest distance from a point on a circle to one of the $j$ subdivided points in the quadrant it lies in. By combining these functions, it is possible to find the maximum possible interval width centered at the input point such that the scaled interval still lies within a quadrant. The combined function is as follows:   

\begin{equation}
    f_{max}(\theta) := \max\{f_3(\theta), f_5(\theta), f_7(\theta)\}.
\end{equation}

What remains is to show that $|\sin^2 \theta_b - \sin^2 \theta_a| \leq \sin \frac{\pi}{21} \sin \frac{8\pi}{21}$ is sufficient to guarantee that $\frac{|\theta_b - \theta_a|}{2} \leq f_{max}(\theta)$ where $\theta := (\theta_b + \theta_a)/2$ is the midpoint of the interval. We first introduce some notation 

\begin{align}
    \epsilon_a &:= \frac{|\sin^2 \theta_b - \sin^2 \theta_a|}{2} \label{eq:epsilon-a} \\ 
    \epsilon_\theta &:= \frac{|\theta_b - \theta_a|}{2}. 
\end{align}

The following derivation relates $\epsilon_a$ with $\theta$ and $\epsilon_\theta$:

\begin{align}
    \epsilon_a &:= \frac{|\sin^2 \theta_b - \sin^2 \theta_a|}{2}\\ 
    &= \frac{|\sin(\theta_b + \theta_a)||\sin(\theta_b - \theta_a)|}{2}\\
    &= \frac{|\sin(2\theta)||\sin(2\epsilon_\theta)|}{2}. \label{eq:epsilon-theta-to-epsilon-a}
\end{align}

Without loss of generality, the remaining analysis will assume that $\theta_a, \theta_b$ lie in the first quadrant. The same analysis holds for the other quadrants by symmetry. With this assumption, our analysis of (\ref{eq:epsilon-theta-to-epsilon-a}) will proceed without its absolute values. From $|\sin^2 \theta_b - \sin^2 \theta_a| \leq \sin \frac{\pi}{21} \sin \frac{8\pi}{21}$ and (\ref{eq:epsilon-a}), we have an upperbound for $\epsilon_a$ given by $\epsilon_a \leq \frac{1}{2}\sin\frac{\pi}{21}\sin\frac{8\pi}{21}$. Given this upperbound, the goal is to determine the maximum tolerable $\epsilon_\theta$ that still allows us to scale the interval such that it lies within a quadrant. To achieve this, we define the following function which is constructed by inverting (\ref{eq:epsilon-theta-to-epsilon-a}). 

\begin{equation}
    \epsilon_{\theta}(\epsilon_{a}, \theta) := \min\left(  \frac{1}{2} \arcsin \left( \frac{2\epsilon_{a}}{\sin(2\theta)} \right), \theta, \pi/2 - \theta \right). \label{eq:epsilon-theta-fn-1}
\end{equation}

The right two terms in the minimum function follow from the constraint that the interval must lie within a single quadrant. Naturally, this constraint forces the interval to be upperbounded by the boundaries of the quadrant. Since we have an upperbound on $\epsilon_a$, the following is the maximum value that $\epsilon_\theta$ can take:

\begin{equation}
    \epsilon_{\theta}(\theta) := \min\left(  \frac{1}{2} \arcsin \left( \frac{\sin \frac{\pi}{21} \sin \frac{8\pi}{21}}{\sin(2\theta)} \right), \theta, \pi/2 - \theta \right). \label{eq:epsilon-theta-fn-2}
\end{equation}

We want to show that $\epsilon_\theta \leq f_{max}$ for all values of $\theta$ in $[0, \pi/2]$, which means that whenever the size of the interval $[\theta_a, \theta_b]$ is $\epsilon_\theta$, there exists an odd integer $q \geq 3$ such that the interval $[q\theta_a, q\theta_b]$ lies within a single quadrant. The functions $\epsilon_\theta(\theta)$ and $f_{max}(\theta)$ evaluated near the endpoints of the interval $[0, \pi/2]$ are identical linear functions. 

What remains is to show that the inequality holds for the middle section of the function, defined by the first term inside the minimum function from (\ref{eq:epsilon-theta-fn-2}). The interval where this term takes the minimum can be calculated by solving for the values of $\theta$ where the functions are equal. This interval is 

\begin{equation}
    \theta \in \left[ \frac{1}{2}\arcsin \sqrt{S}, \frac{\pi}{2} - \frac{1}{2}\arcsin \sqrt{S}  \right],
\end{equation}

where $S := \sin\frac{\pi}{21}\sin\frac{8\pi}{21}$.

A simple analysis of the piecewise linear function $f_{max}$ shows that it takes its minimum value of $\frac{\pi}{42}$ when $\theta = \frac{4\pi}{21}$. Evaluating $\epsilon_\theta$ at this value of $\theta$ gives

\begin{align}
    \epsilon_\theta\left(\frac{4\pi}{21}\right) &= \frac{1}{2}\arcsin\left(\frac{\sin \frac{\pi}{21} \sin \frac{8\pi}{21}}{\sin \frac{8\pi}{21}}\right) \\ 
    &= \frac{\pi}{42}.
\end{align}

For all other values of $\theta$, it is clear that $\epsilon_\theta \leq f_{max}$ which can be seen figuratively in Figure \ref{fig:fmaxeps}. This can be verified by testing that $\epsilon_\theta \leq f_{max}$ holds within each of the intervals that define the piecewise linear parts of the function manually. Therefore, $\epsilon_\theta \leq f_{max}$ in the possible values of $\theta$, showing that there indeed exists an odd integer $q \geq 3$ such that the multiplied interval lies in a single quadrant. 

\end{@proof}

\section{Appendix B: Proof of Lemma \ref{lemma:K-i-sum}}

\begin{@proof}
To prove this lemma, we first show that for all $j = 1, \ldots, t$, we have 

\begin{equation}\label{eq:lemma34-ind-goal}
    K_{j} \leq \frac{K_{max}}{3^{t-j}},
\end{equation}

which by the assumption that $f$ is an increasing function implies that 

\begin{equation}
    f(K_{j}) \leq f\left(\frac{K_{max}}{3^{t-j}}\right).
\end{equation}

We can then sum over all $j$ and rewrite the indices of the right-hand side to get 

\begin{equation}
    \sum_{j=1}^{t} f(K_{j}) = \leq \sum_{j=1}^{t} f\left(\frac{K_{max}}{3^{t-j}}\right) = \sum_{i=0}^{t-1} f\left(\frac{K_{max}}{3^j}\right). 
\end{equation}

Now we show how to prove equation \ref{eq:lemma34-ind-goal}. We prove this statement inductively backward over the indices $j$. By definition, we know that $K_{j-1} \leq K_j/3$.
The base case when $j=t$ holds by definition since $K_t \leq K_{max}$.
By inductive hypothesis, assume that $K_{k} \leq \frac{K_{max}}{3^{t-k}}$ 
holds for some $1 < k \leq t$. Then, 

\begin{align}
    K_{k-1} &\leq \frac{K_k}{3} \\
    &\leq \frac{K_{max}}{3^{t-(k-1)}}
\end{align}

which is what we wanted. 
\end{@proof}

\section{Appendix C: Relative Error Algorithm}

A key difference between QACS \cite{aaronson2020} and IQAE \cite{grinko2021a} were the type of error bounds they used to define their algorithms. IQAE used an absolute error, so our modification to the algorithm also adopts the absolute error bound. QACS used a relative error bound, which is a commonly used error bound when defining approximation algorithms. We show here that the relative error bound can be achieved with an extra $1/a$ overhead, which is optimal as shown in QACS. 

\begin{theorem}\label{thm:algorithm-relative}
Given a confidence level $1 - \alpha \in (0, 1)$, a target accuracy $\epsilon > 0$, and an $(n + 1)$-qubit unitary $\mathcal{A}$ satisfying 

\[\mathcal{A}\ket{0}_n\ket{0} = \sqrt{a}\ket{\psi_0}\ket{0} + \sqrt{1-a}\ket{\psi_1}\ket{1}\]

where $\ket{\psi_0}$ and $\ket{\psi_1}$ are arbitrary $n$-qubit states and $a \in (0, 1)$, Algorithm \ref{alg:modified-iqae-rel} outputs a confidence interval for $a$ that satisfies

\[\mathbb{P}[a \not \in [a_l, a_u]] \leq \alpha\]

where $a_u := a(1 +\epsilon)$ and $a_l := a(1 - \epsilon)$, leading to a $(1 + \epsilon)$-approximation for $a$ with a confidence of $1 - \alpha$, using $O\left(\frac{1}{a} \frac{1}{\epsilon}\log \frac{1}{\alpha}\right)$ applications of $\mathcal{A}$. 
\end{theorem}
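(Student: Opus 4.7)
The plan is to reduce the relative-error problem to the absolute-error guarantee of Theorem~\ref{thm:algorithm} by wrapping Algorithm~\ref{alg:modified-iqae} in an outer loop that maintains a geometrically shrinking upper bound on the unknown amplitude. The key observation is that an absolute-error estimate of precision $\epsilon u$ translates into a relative-error estimate of precision $\epsilon$ whenever $a \geq u$, so it suffices to invoke the absolute-error algorithm with a sequence of candidate upper bounds that eventually falls below $a$.

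Concretely, set $u_j := 2^{-j}$ for $j = 0, 1, 2, \ldots$ and assume without loss of generality that $\epsilon \leq 1/2$ (otherwise the constant $1/2$ already satisfies the $(1+\epsilon)$-relative requirement). At iteration $j$, I invoke Algorithm~\ref{alg:modified-iqae} with absolute precision $\epsilon_j := \epsilon u_j / 4$ and confidence parameter $\alpha_j := \alpha \cdot 2^{-(j+1)}$, obtaining an interval $[\hat a_l^{(j)}, \hat a_u^{(j)}]$ of width at most $2\epsilon_j$ that contains $a$ with probability at least $1 - \alpha_j$. If $\hat a_l^{(j)} > u_j/4$ I declare success and return the interval; otherwise I proceed to iteration $j+1$.

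Correctness combines a union bound with a termination argument. Since $\sum_{j \geq 0} \alpha_j = \alpha$, every inner call succeeds simultaneously with probability at least $1 - \alpha$; on that event, whenever the loop terminates at iteration $j$ we have $a \geq \hat a_l^{(j)} > u_j / 4$ and the half-width $\epsilon_j = \epsilon u_j / 4 < \epsilon a$, matching the required relative precision. Deterministically, once $u_j \leq 2a$ the lower endpoint satisfies $\hat a_l^{(j)} \geq a - 2\epsilon_j \geq u_j(1-\epsilon)/2 \geq u_j / 4$, so the loop terminates within $j^* = O(\log(1/a))$ iterations. Theorem~\ref{thm:algorithm} then charges $O\bigl(\epsilon_j^{-1} \log(1/\alpha_j)\bigr) = O\bigl(2^j \epsilon^{-1}(\log(1/\alpha) + j)\bigr)$ queries at iteration $j$; summing from $j = 0$ to $j^*$ gives a geometric series dominated by its last term, yielding the claimed $O\bigl((a\epsilon)^{-1} \log(1/\alpha)\bigr)$ bound.

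The main obstacle, typical of doubling reductions, is aligning the four schedules --- $u_j$, $\epsilon_j$, the termination threshold, and $\alpha_j$ --- so that (i) the termination check is decisive the moment $u_j$ drops below $2a$, forcing the number of rounds to be deterministically $O(\log(1/a))$, and (ii) the confidence budget, split across an a-priori-unknown number of rounds, still sums to at most $\alpha$ without inflating the final query bound. The threshold $u_j/4$ paired with precision $\epsilon u_j/4$ makes (i) go through for $\epsilon \leq 1/2$, while the geometric schedule $\alpha_j = \alpha \cdot 2^{-(j+1)}$ resolves (ii) at the cost of only an additive $j$ inside each logarithm, which is swallowed by the $2^j$ prefactor in the query sum.
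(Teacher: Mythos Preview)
Your doubling reduction is exactly the idea the paper uses: wrap the absolute-error routine in an outer loop that halves the target precision each round and stop once the returned interval is tight relative to its own center. The paper's Algorithm~\ref{alg:modified-iqae-rel} halves $\epsilon_i$ and checks $a_u - a_l \le 2\hat a\,\epsilon$; you instead track an explicit upper bound $u_j = 2^{-j}$ and stop when $\hat a_l^{(j)} > u_j/4$. Both stopping rules force termination in $O(\log(1/a))$ outer iterations, and both bound the total queries by a geometric series dominated by its final term, so at the structural level your proposal and the paper's proof coincide.

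The one substantive difference is the confidence budget, and here your argument is more careful than the paper's but also slips at the very end. The paper simply passes the full $\alpha$ to every inner call of ModifiedIQAE and never takes a union bound over the outer iterations; your geometric split $\alpha_j = \alpha\,2^{-(j+1)}$ is the right repair for correctness. However, your closing claim that the resulting additive $j$ inside $\log(1/\alpha_j)$ is ``swallowed by the $2^j$ prefactor'' is not correct: one has $\sum_{j\le j^*} j\,2^j = \Theta\bigl(j^*\,2^{j^*}\bigr) = \Theta\bigl((1/a)\log(1/a)\bigr)$, so your query bound is actually $O\bigl((a\epsilon)^{-1}(\log(1/\alpha)+\log(1/a))\bigr)$ rather than the stated $O\bigl((a\epsilon)^{-1}\log(1/\alpha)\bigr)$. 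The extra $\log(1/a)$ appears to be inherent to any a~priori split of $\alpha$ across an unknown number of outer rounds; the paper sidesteps it only by not splitting at all.
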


\begin{algorithm}
\caption{Relative IQAE}
\label{alg:modified-iqae-rel}
\begin{algorithmic}[1]
\Require $\epsilon > 0$, $\alpha > 0$, Unitary $\mathcal{A}$
\State $[a_l, a_u] = [0, 1]$
\State $\epsilon_i = \epsilon$
\State $\hat{a} = 0.5$
\While{$a_u - a_l > 2\hat{a}\epsilon$} 
\State $\epsilon_i = \epsilon_i / 2$
\State $[a_l, a_u] = \text{ModifiedIQAE} ( \epsilon_i, \alpha, \mathcal{A})$
\State $\hat{a} = (a_u - a_l)/2$
\EndWhile
\State \Return $[a_l, a_u]$
\end{algorithmic}
\end{algorithm}

\begin{@proof}  
We know that Algorithm \ref{alg:modified-iqae} given an error parameter $\epsilon > 0$ outputs an absolute error confidence interval $[\hat{a}+\epsilon, \hat{a}-\epsilon]$ that contains $a$ with confidence $1 - \alpha$. In the $i$-th iteration of the loop of algorithm \ref{alg:modified-iqae-rel}, ModifiedIQAE is called using $\epsilon_i := \epsilon/2^{i}$, which will output an interval satisfying
\begin{equation}
    [\hat{a} + \epsilon/2^{i}, \hat{a} - \epsilon/2^{i}].
\end{equation}
For $a \in [1/2^{i}, 1/2^{i-1}]$, this is a $(1 + \epsilon)$ relative error confidence interval. 

From (\ref{eq:query-complexity-exact}), we can write the exact query complexity of ModifiedIQAE given $\epsilon_i$ as a parameter. We define the constant $C = 1/\left(\sin^2\left(\pi/21 \right)\sin^2\left(8\pi/21 \right)\right)$ to simplify the presentation. 
\begin{align}
    &\frac{3\pi C}{4} \frac{1}{\epsilon_i}\ln \left( \frac{\sqrt{27}}{\alpha} \right) \\
    = &\frac{3\pi C}{4} \frac{2^i}{\epsilon}\ln \left( \frac{\sqrt{27}}{\alpha} \right).
\end{align}
If we required $R$ total iterations to achieve the target approximation error, then $a < 1/2^{R - 1}$, and we can sum over this to get the query complexity 
\begin{align}
    &\frac{3\pi C}{4} \frac{1}{\epsilon}\ln \left( \frac{\sqrt{27}}{\alpha} \right) \sum_{i=1}^R 2^i \\
    &= \frac{3\pi C}{4} \frac{1}{\epsilon}\ln \left( \frac{\sqrt{27}}{\alpha} \right) \left(2^{R+1} - 2\right) \\ 
    &< \frac{3\pi C}{\sin^2\left(\frac{\pi}{21}\right) \sin^2\left(\frac{8\pi}{21}\right)}
    \frac{1}{a} \frac{1}{\epsilon}\ln \left( \frac{\sqrt{27}}{\alpha} \right) \\
    &= O\left(\frac{1}{a}\frac{1}{\epsilon}\ln\left( \frac{1}{\alpha} \right)\right)
\end{align}
The inequality follows from $4/a > 2^{R+1}$. Therefore Algorithm \ref{alg:modified-iqae-rel} returns a relative error confidence interval with confidence $\alpha$ in the desired query complexity. 
\end{@proof}

\end{document}